\newcounter{ctr}\setcounter{ctr}{0}
\newtheorem{lemma}{Lemma}
\newtheorem{theorem}{Theorem}
\newtheorem{corollary}{Corollary}
\newcommand{\be}{\begin{eqnarray}}
\newcommand{\ee}{\end{eqnarray}}
\newcommand{\nn}{\nonumber}
\newcommand{\bm}{\boldmath}
\newcommand{\mc}{\multicolumn}
\newcommand{\m}{\mbox{\bm $m$}}
\newcommand{\uu}{\mbox{\bm $u$}}
\newcommand{\ccc}{{\mbox{\bm $c$}}}
\newcommand{\eee}{\mbox{\bm $e$}}
\newcommand{\gggg}{\mbox{\bm $g$}}
\newcommand{\0}{\mbox{\bm  $0$}}
\newcommand{\zz}{\mbox{\bm $z$}}
\renewcommand\paragraph{\@startsection{paragraph}{4}{\z@}%
    {1.5ex plus .2ex minus .3ex}%
            {-0em}%
                        {\normalsize\bf}}
\begin{document}
\title{Update-Efficient Error-Correcting Product-Matrix Codes }
\author{Yunghsiang S. Han\thanks{Part of this work was presented at the IEEE
International Symposium on Information Theory (ISIT~2013).
This work was supported in part by  CASE: The Center for Advanced Systems and Engineering, a NYSTAR center for advanced technology at Syracuse University; the National Science of Council (NSC) of
Taiwan under grants no. 99-2221-E-011-158-MY3 and NSC 101-2221-E-011-069-MY3;
US National Science Foundation under grant no. CNS-1117560 and McMaster
University new faculty startup fund. Han's work was completed during his visit of Syracuse University from 2012 to 2013.

Han is with the Dept. of
          Electrical Engineering, National Taiwan University of Science and Technology,
          Taipei, Taiwan
          (e-mail: yshan\mbox{@}mail.ntust.edu.tw), Pai is with the Graduate Institute of communication  Engineering, National Taipei University, Taiwan, R.O.C., Zheng is with the Dept. of Computing and Software, McMaster University, Hamilton, ON, Canada, and  Varshney is with the  Dept. of Electrical Engineering and Computer  Science, Syracuse University, Syracuse, NY USA.}, {\it Fellow}, {\it IEEE}, Hung-Ta~Pai, {\it Senior Member}. {\it IEEE}, Rong~Zheng, {\it Senior Member}. {\it IEEE}, Pramod~K. Varshney,  {\it Fellow}, {\it IEEE}} \pubid{}
\maketitle

\begin{abstract}
Regenerating codes provide an efficient way to recover data at failed nodes in
distributed storage systems. It has been shown that regenerating codes can be designed
to minimize the per-node storage (called MSR) or minimize the communication
overhead for regeneration (called MBR).  In this work, we propose new
encoding schemes for $[n,d]$ error-correcting MSR and MBR codes that generalize our
earlier work on error-correcting regenerating codes.  We show that by choosing
a suitable diagonal matrix, any generator matrix of the $[n,\alpha]$
Reed-Solomon (RS) code can be integrated into the encoding matrix.  Hence, MSR
codes with the least update complexity can be found.  By using the coefficients of generator polynomials of $[n,k]$ and $[n,d]$ RS codes, we present a   least-update-complexity encoding scheme for MBR codes. A decoding
scheme is  proposed that utilizes the $[n,\alpha]$ RS code to perform data
reconstruction for MSR codes. The proposed decoding scheme has better error correction
capability and incurs the least number of node accesses when errors are present. A new decoding scheme is also proposed for MBR codes that can correct more error-patterns.

\end{abstract}

\begin{keywords}
Distributed storage, Regenerating codes, Reed-Solomon codes, Decoding, Product-Matrix codes
\end{keywords}

\section{Introduction}
\label{SEC:Intro}

Cloud storage is gaining popularity as an alternative to enterprise storage
where data is stored in virtualized pools of storage typically hosted by
third-party data centers.  Reliability is a key challenge in the design of
distributed storage systems that provide cloud storage. Both crash-stop and
Byzantine failures (as a result of software bugs and malicious attacks) are
likely to be present during data retrieval. A
crash-stop failure makes a storage node unresponsive to access requests. In
contrast, a Byzantine failure responds to access requests with erroneous data.
To achieve better reliability, one common approach is to replicate
data files on multiple storage nodes in a network. There are two kinds of approaches: duplication (Google)~\cite{GHE03} and  erasure coding~\cite{KUB00,BHA04}. Duplication makes an exact copy of each data and needs lots of storage space. The advantage of this approach is that only one storage node needs to be accessed to obtain the original data. In contrast, in the second approach, erasure coding is employed to encode the original
data and then the encoded data  is distributed to storage nodes. Typically, multiple
storage nodes need to be accessed to recover the original data. One popular class
of erasure codes is the maximum-distance-separable (MDS) codes.
With $[n,k]$ MDS codes such as Reed-Solomon (RS) codes, $k$ data items are encoded and
then distributed to and stored at $n$ storage nodes. A user or a data collector
can retrieve the original data  by accessing {\it any} $k$ of the storage
nodes, a process referred to as {\it data reconstruction}.

%

Any storage node can fail due to hardware or software damage.  Data stored at
the failed nodes need to be recovered (regenerated)  to remain functional to perform data reconstruction. The
process to recover the stored (encoded) data at a storage node is called {\it
data regeneration}.
A
simple way for data regeneration is to first reconstruct the original data
and then recover the data stored at the failed node.  However, it is
not efficient to retrieve the entire $B$ symbols of the original file to
recover a much smaller fraction of data stored at the failed node.
{\it Regenerating codes}, first introduced  in the pioneer works by Dimakis {\it
et al.} in ~\cite{DIM07,DIM10}, allow efficient data regeneration.  To
facilitate data regeneration, each storage node stores $\alpha$ symbols and a
total of $d$ surviving nodes are accessed to retrieve $\beta \le \alpha$
symbols from each node.  A trade-off exists between the storage overhead
and the regeneration (repair) bandwidth needed for data regeneration.  Minimum
Storage Regenerating (MSR) codes first minimize the amount of data stored per
node, and then the repair bandwidth, while Minimum Bandwidth Regenerating (MBR)
codes carry out the minimization in the reverse order. There have been many
works that focus on the design of regenerating
codes~\cite{WU07,WU10,CUL09,WU09,RAS09,PAW11,OGG11,RAS11}. There are two categories of approaches to regenerate data at a failed node. If
the replacement data is exactly the same as that previously stored at the
failed node, we call it  {\it exact regeneration}. Otherwise, if the
replacement data only guarantees the correctness of data reconstruction and
regeneration properties, it is called {\it functional regeneration}. In
practice, exact regeneration is more desirable since there is no need to inform
each node in the network regarding the replacement. Furthermore, it is easy to
keep the codes systematic via exact regeneration, where partial data can be
retrieved without accessing all $k$ nodes. It has been proved that no linear code performing exact regeneration can achieve the MSR point for any ￼$[n,k,d<2k-3]$ ￼ ￼when $\beta$ is normalized to 1~\cite{SHA12}. ￼However, when $B$ approaches infinity, this is achievable for any $k\le d\le n-1$ ~\cite{CAD10}. In this work, we only consider exact regeneration.

There are several existing code constructions of regenerating codes for exact regeneration\cite{WU09,CAD10,SUH11,RAS11}. In~\cite{WU09}, Wu and Dimakis  apply ideas from interference alignment\cite{CAD08,MAD08} to construct the codes for $n = 4$ and $k = 2$. The idea was extended to the more general case of $k<\max\{3,n/2\}$ in~\cite{SUH11}.
In~\cite{RAS11}, Rashmi
{\it et al.} used product-matrix construction to design optimal $[n,k,d\ge 2k-2]$ MSR codes and $[n,k,d]$ MBR codes for exact regeneration. These constructions of exact-regenerating codes are the first for which the code length $n$ can be chosen independently of other parameters. However,  only  crash-stop
failures of storage nodes are considered in~\cite{RAS11}.

The problem of the security of regenerating codes was  considered
in~\cite{PAW11} and  in~\cite{OGG11,HAN12-INFOCOM,RAS12}. In~\cite{PAW11}, the security
problem against eavesdropping and adversarial attack during the data reconstruction and regeneration
processes was considered. Upper bounds on the maximum amount of information that
can be stored safely were derived. Pawar {\it et al.} 
 also gave an explicit code construction  for $d=n-1$ in
the bandwidth-limited regime. The problem of Byzantine fault tolerance for
regenerating codes was considered in~\cite{OGG11}. Oggier and Datta investigated the
resilience of regenerating codes when supporting multi-repairs. By
collaboration among newcomers, they derived upper bounds on the resilience
capability of regenerating codes. Our work  deals with Byzantine failures for product-matrix regenerating codes and it does not need to have multiple newcomers to recover the failures.

Based on the same code construction as given in~\cite{RAS11}, Han {\it et al.} extended Rashmi's work to provide decoding algorithms that can handle
Byzantine failures~\cite{HAN12-INFOCOM}. In~\cite{HAN12-INFOCOM}, decoding algorithms for both MSR and MBR error-correcting product-matrix codes were
provided. In particular,  the decoding of an $[n,k,d]$ MBR code given in~\cite{HAN12-INFOCOM} can decode errors up to error correction capability of $\lfloor\frac{n-k+1}{2}\rfloor=\frac{n-k}{2}$ since $n-k$ is even.  In~\cite{RAS12}, the code capability and resilience were discussed
for error-correcting regenerating codes. Rashmi, {\it et al.} proved that it is possible to decode  an $[n,k,d]$ MBR code up to $\lfloor\frac{n-k}{2}\rfloor$ errors. The authors also claimed  that any $[n,k,d\ge 2k-2]$ MSR code can be decoded up to $\lfloor\frac{n-k}{2}\rfloor$ errors.  However no   explicit decoding  (data reconstruction) procedure was provided due to which these codes cannot be used in practice. Thus, one contribution of this paper is to present a decoding algorithm for MSR codes. 

In addition to bandwidth efficiency and error correction capability, another
desirable feature for regenerating codes is {\it update
complexity}~\cite{RAW11}, defined as the number of nonzero
elements in the row of the encoding matrix with the maximum Hamming weight.\footnote{The update complexity adopted from~\cite{RAW11} is not equivalent to the maximum number of encoded symbols that
must be updated while a single data symbol is modified.}  The smaller the number, the lower the update complexity is.  Low update complexity
is desirable in scenarios where updates are frequent.

One drawback of the decoding algorithms for MSR codes given
in~\cite{HAN12-INFOCOM} is that, when one or more storage nodes have erroneous
data, the decoder needs to access extra data from many storage nodes (at least
$k$ more nodes) for data reconstruction. Furthermore, when one symbol in the
original data is updated, all storage nodes need to update their respective
data. Thus, the MSR and MBR codes in ~\cite{HAN12-INFOCOM} have the maximum
possible update complexity. Both of these issues deficiencies are addressed in this paper.
First, we propose a general encoding scheme for MSR codes. As a special case,
least-update-complexity codes are designed. We also design
least-update-complexity encoding matrix for the MBR  codes by using the
coefficients of generator polynomials of the $[n,k]$ and $[n,d]$ RS codes. The proposed codes are not only with least update complexity but also with the smallest numbers of updated symbols while a single data symbol is modified. This is in contrast to the  existing product-matrix codes.
Second, a new decoding algorithm is presented for MSR codes.  It not only
exhibits better error correction capability but also incurs low communication
overhead when errors occur in the accessed data.  Third, we devise a  decoding scheme for the MBR codes that can correct more
error patterns compared to the one in~\cite{HAN12-INFOCOM}.

The main contributions of this paper beyond the existing literature are as follows:
\begin{itemize}
\item The general encoding schemes of product-matrix MSR
and MBR codes are derived. The encoder based on RS codes is no longer limited to the
Vandermonde matrix proposed in~\cite{RAS11} and \cite{HAN12-INFOCOM}. Any generator matrix of the
corresponding RS codes can be employed for the MSR and MBR codes. As a result,
this highlights the connection between product-matrix MSR and MBR codes and
well-known RS codes in coding theory.
\item The MSR and MBR codes with systematic generator matrices of the RS codes  are provided. These codes have least update complexity compared to existing codes such as systematic MSR and MBR codes proposed by Rashmi {\it et al.}~\cite{RAS11}. This
approach also makes product-matrix MSR and MBR codes more practical due to higher
update efficiency. 
\item The detailed decoding algorithm of data construction of
MSR codes is provided. It is non-trivial to extend the  decoding procedure given
in~\cite{RAS11}  to handle errors. The difficulty arises 
from the fact that an error in $Y_{\alpha\times n}$
will propagate into many places in $P$ and $Q$. Due to the operations involved in
the decoding process, many rows  cannot be decoded
successfully or correctly. No decoding algorithm was provided in~\cite{RAS12} that can
decode up to  $\lceil{(n-k+1)/2}\rceil$ errors even though the error-correction
capability was analyzed in~\cite{RAS12}. 
\item The decoding algorithm of MBR codes that can decode beyond error-correction capability for some error patterns is also presented. This decoding algorithm can correct errors up to $$\frac{n-k}{2}+ \left\lfloor\frac{n-k+1-\lfloor\frac{n-k+1}{2}\rfloor}{2}\right\rfloor$$
even though not all error patterns up to such number of errors can be corrected.
\end{itemize}

The rest of this paper is organized as follows. Section~\ref{SEC:review} gives
an overview of error-correcting regenerating codes.  Section~\ref{SEC:MSR}
presents the least-update-complexity encoding and decoding schemes
for error-correcting MSR regenerating codes.  Section~\ref{SEC:MBR-coding}
demonstrates the least-update-complexity encoding of MBR codes and the
corresponding decoding scheme.  Section~\ref{SEC:eval} details evaluation
results for the proposed decoding schemes. Section~\ref{SEC:conclude} concludes
the paper with a list of future work. Since only error-correcting regenerating
codes are considered in this work, unless stated otherwise, we refer to
error-correcting MSR and MBR codes as MSR and MBR codes in the rest of the
paper.

\section{Error-Correcting  Product-Matrix Regenerating Codes}
\label{SEC:review}
In this section, we give a brief overview of  regenerating codes,  and the
MSR and MBR product-matrix code constructions in \cite{RAS11}.
\subsection{Regenerating Codes}
\label{subSEC:RC}

Let $\alpha$ be the number of symbols stored at each storage node and
$\beta\le\alpha$ the number of symbols downloaded from each storage during
regeneration.  To repair the stored data at the failed node, a helper node
accesses $d$ surviving nodes. The design of regenerating codes ensures that the total  regenerating bandwidth be
much less than that of the original data, $B$. A regenerating code must be capable
of reconstructing the original data symbols and regenerating coded data  at a
failed node.   An $[n,k,d]$ regenerating code requires at least $k$ nodes to ensure successful data reconstruction, and $d$
surviving nodes to perform
regeneration~\cite{RAS11}, where $n$ is the number of storage
nodes and $k\le d\le n-1$.
%

The cut-set bound given in~\cite{WU07,DIM10} provides a constraint on
the  repair bandwidth. By this bound, any regenerating code must satisfy
the following inequality:
\begin{eqnarray}
B\le \sum_{i=0}^{k-1} \min\{\alpha,(d-i)\beta\}~.\label{main-inequality}
\end{eqnarray}
From~\eqref{main-inequality}, $\alpha$ or $\beta$ can be minimized achieving
either the minimum storage requirement or  the minimum repair bandwidth
requirement, but not both. The two extreme points in~\eqref{main-inequality}
are referred to as the minimum storage regeneration (MSR) and minimum bandwidth
regeneration (MBR) points, respectively.  The values of $\alpha$ and $\beta$
for the MSR point can be obtained by first minimizing $\alpha$ and then minimizing
$\beta$:
\begin{eqnarray}
\alpha&=&d-k+1\nn\\
B&=&k(d-k+1)=k\alpha~,\label{NMSR}
\end{eqnarray}
where we normalize $\beta$ and set it equal to $1$.\footnote{It has been proved that when designing  $[n,k,d]$ MSR codes for $k/(n+1)\le 1/2$. it
suffices to consider those with $\beta=1$~\cite{RAS11}.} Reversing the order of minimization we have   $\alpha$ for MBR as
\begin{eqnarray}
\alpha&=&d\nn\\
B&=&kd-k(k-1)/2~,\label{NMBR}
\end{eqnarray}
while $\beta=1$.

\subsection{Product-Matrix MSR  Codes With Error Correction Capability}
\label{subSEC:MSR}
Next, we describe the MSR code construction originally given in~\cite{RAS11} and adapted later in~\cite{HAN12-INFOCOM}. Here, we assume $d = 2\alpha$.\footnote{An elegant  method to extend the construction of $d>2\alpha$ based on the construction of $d=2\alpha$ has been given in~\cite{RAS11}. Since the same technology can be applied to the code constructions proposed in this work, it is omitted here.}
The information sequence $\m=[m_0,m_1,\ldots,
m_{B-1}]$ can be arranged into
an information vector $U=\left[Z_1Z_2\right]$ with size $\alpha\times d$ such that
 $Z_1$ and $Z_2$ are symmetric matrices with
dimension $\alpha\times\alpha$.
 An $[n,d=2\alpha]$ RS code is adopted to construct the
MSR code~\cite{RAS11}. Let $a$ be a generator of $GF(2^m)$. In the encoding of the MSR code, we have
\begin{eqnarray}
U\cdot G=C,\label{eq:generator}
\end{eqnarray}
where
 $$G=\left[\begin{array}{cccc}
1&1&\cdots&1\\
a^0&a^1&\cdots&a^{n-1}\\(a^0)^2&(a^1)^2&\cdots&(a^{n-1})^2\\
&&\vdots&\\
(a^0)^{d-1}&(a^1)^{d-1}&\cdots&(a^{n-1})^{d-1}\end{array}\right],$$
and $C$  is
the codeword vector with dimension $(\alpha\times n)$.


It is possible to rewrite generator matrix $G$ of the RS code as,
 \begin{eqnarray}
G&=&
\left[\begin{array}{cccc}
1&1&\cdots&1\\
a^0&a^1&\cdots&a^{n-1}\\(a^0)^2&(a^1)^2&\cdots&(a^{n-1})^2\\
&&\vdots&\\
(a^0)^{\alpha-1}&(a^1)^{\alpha-1}&\cdots&(a^{n-1})^{\alpha-1}\\
(a^0)^\alpha 1&(a^1)^\alpha 1&\cdots&(a^{n-1})^\alpha 1\\
(a^0)^\alpha a^0&(a^1)^\alpha a^1&\cdots&(a^{n-1})^\alpha a^{n-1}\\
(a^0)^\alpha(a^0)^2&(a^1)^\alpha(a^1)^2&\cdots&(a^{n-1})^\alpha(a^{n-1})^2\\
&&\vdots&\\
(a^0)^\alpha (a^0)^{\alpha-1}&(a^1)^\alpha(a^1)^{\alpha-1}&\cdots&(a^{n-1})^\alpha(a^{n-1})^{\alpha-1}
\end{array}\right]\\
&=&\left[\begin{array}{c}
\bar{G}\\
\bar{G}\Delta
\end{array}
\right]~,
\label{MSR-encoding}
\end{eqnarray}
where $\bar{G}$ contains the first $\alpha$ rows in $G$, and $\Delta$ is a
diagonal matrix with $(a^0)^\alpha,\ (a^1)^\alpha,\ (a^2)^\alpha,\ldots,\
(a^{n-1})^\alpha$ as diagonal elements, namely,
\begin{equation}
\label{eq:delta}
\Delta=
\left[\begin{array}{cccccc}
(a^0)^\alpha&0&0&\cdots&0&0\\
0&(a^1)^\alpha&0&\cdots&0&0\\
&&\vdots&&\\
0&0&0&\cdots&0&(a^{n-1})^\alpha
\end{array}\right]~.
\end{equation}
Note that if the RS  code is over $GF(2^m)$ for $m\ge \lceil
\log_2 n\alpha\rceil$, then it can be shown that $(a^0)^\alpha,\ (a^1)^\alpha,\
(a^2)^\alpha,\ldots,\ (a^{n-1})^\alpha$ are all distinct.
According to the encoding procedure, the $\alpha$ symbols stored in storage node $i$ are given by,
$$U\cdot \left[\begin{array}{c}
\gggg_i^T\\
(a^{i-1})^\alpha \gggg_i^T\end{array}\right]=Z_1\gggg_i^T+(a^{i-1})^\alpha Z_2\gggg_i^T,$$
where $\gggg_i^T$ is the $i$th column in $\bar{G}$.

\subsection{Product-Matrix MBR  Codes With Error Correction Capability}
\label{sec:MBR-encoding}
In this section, we describe the MBR code constructed in~\cite {RAS11} and reformatted later in~\cite{HAN12-INFOCOM}. Note that at the MBR point, $\alpha=d$.
Let  the information sequence $\m=[m_0,m_1,\ldots, m_{B-1}]$ be arranged into
an information vector $U$ with size $\alpha\times d$, where
\begin{eqnarray}
\label{U-mbr}
U=\left[\begin{array}{cc}
A_1&A_2^T\\
A_2&\0\end{array}
\right]~,
\end{eqnarray}
$A_1$ is a $k\times k$ symmetric matrix, $A_2$ a $(d-k)\times  k$ matrix, $\0$ is
the $(d-k)\times (d-k)$ zero matrix.  Note that both $A_1$ and $U$ are symmetric.  It
is clear that $U$ has a dimension $d\times d$ (or $\alpha\times d$).
An $[n,d]$ RS code is chosen to encode each row of $U$. The generator matrix of the RS code is given as
\begin{eqnarray}
\label{MBR-G}
G&=&\left[\begin{array}{cccc}
1&1&\cdots&1\\
a^0&a^1&\cdots&a^{n-1}\\
(a^0)^2&(a^1)^2&\cdots&(a^{n-1})^2\\
&&\vdots&\\
(a^0)^{k-1}&(a^1)^{k-1}&\cdots&(a^{n-1})^{k-1}\\
(a^0)^k&(a^1)^k&\cdots&(a^{n-1})^k\\
&&\vdots&\\
(a^0)^{d-1}&(a^1)^{d-1}&\cdots&(a^{n-1})^{d-1}\end{array}\right]~,
\end{eqnarray}
 where $a$ is a generator of $GF(2^m)$.  Let $C$ be the codeword vector with
dimension $(\alpha\times n)$. It can be obtained as $$U\cdot G=C.$$ From~\eqref{MBR-G}, $G$ can be divided into two sub-matrices as
\begin{eqnarray}
G=\left[\begin{array}{c}
G_k\\
S
\end{array}\right]~,\label{MBR-G-2}
\end{eqnarray}
where
\begin{eqnarray}
G_k
=\left[\begin{array}{cccc}
1&1&\cdots&1\\
a^0&a^1&\cdots&a^{n-1}\\
(a^0)^2&(a^1)^2&\cdots&(a^{n-1})^2\\
&&\vdots&\\
(a^0)^{k-1}&(a^1)^{k-1}&\cdots&(a^{n-1})^{k-1}\\
\end{array}\right]\label{G-k}
\end{eqnarray}
and
$$S=\left[\begin{array}{cccc}(a^0)^{k}&(a^1)^{k}&\cdots&(a^{n-1})^{k}\\
&&\vdots&\\
(a^0)^{d-1}&(a^1)^{d-1}&\cdots&(a^{n-1})^{d-1}\end{array}\right]~.$$ It can be shown that
$G_k$ is a generator matrix of the $[n,k]$ RS code and it will be used in the
decoding  for data reconstruction.

\section{Encoding and Decoding Schemes for Product-Matrix MSR Codes}
\label{SEC:MSR}
In this section, we propose a new encoding scheme for $[n,d]$
error-correcting MSR codes. With a feasible matrix $\Delta$, $\bar G$
in~\eqref{MSR-encoding} can be any generator matrix of the $[n,\alpha]$ RS
code. The code construction in~\cite{RAS11,HAN12-INFOCOM} is thus a special case of
our proposed scheme. We can also select a suitable generator matrix such that
the update complexity of the resulting code is minimized. A decoding
scheme is then proposed that  uses the subcode of the $[n,d]$ RS code, the
$[n,\alpha=k-1]$ RS code generated by $\bar{G}$, to perform the data
reconstruction.

\subsection{Encoding Schemes for Error-Correcting MSR Codes}
\label{SEC:MSR-encoding}
RS codes are known to have very fast decoding algorithms and exhibit good error correction capability.
From \eqref{MSR-encoding} in Section~\ref{subSEC:MSR}, a generator matrix $G$ for product-matrix MSR codes needs to satisfy:
\begin{enumerate}
\item $G=\left[\begin{array}{c}
\bar G\\
\bar G\Delta\end{array}\right],$ where $\bar G$ contains the first $\alpha$ rows in $G$ and $\Delta$ is a diagonal matrix with distinct elements in the diagonal.
\item $\bar G$ is a generator matrix of the $[n,\alpha]$ RS code and $G$ is a generator matrix of the $[n,d=2\alpha]$ RS code.
\end{enumerate}
Next, we present a sufficient condition for $\bar G$ and $\Delta$ such that $G$
is a generator matrix of an $[n,d]$ RS code. We first introduce some notations.
Let $g_{0y}(x)=\prod_{i=0}^{n-y-1}(x-a^i)$ and the $[n,y]$ RS code generated by
$g_{0y}(x)$ be $C_{0y}$. Similarly, let $g_{1y}(x)=\prod_{i=1}^{n-y}(x-a^i)$
and the $[n,y]$ RS code generated by $g_{1y}(x)$ be $C_{1y}$. Clearly,
$a^0,a^1,a^2,\ldots,a^{n-y-1}$ are roots of $g_{0y}(x)$, and
$a^1,a^2,\ldots,a^{n-y}$ are roots of $g_{1y}(x)$. Thus, $C_{0y}$ and $C_{1y}$
are equivalent RS codes.

\begin{theorem}
\label{thm:MSR-encoding}
Let $\bar G$ be a generator matrix of  the $[n, \alpha]$ RS code $C_{0\alpha}$.
Let the diagonal elements of $\Delta$ be $b_0,b_1,\ldots,b_{n-1}$ such that
$b_i\neq b_j$ for all $i\neq j$, and  $( b_0,b_1,\ldots,b_{n-1})$ is a codeword in
$C_{1(\alpha+1)}$ but not $C_{1\alpha}$. In other words, $( b_0,b_1,\ldots,b_{n-1})\in
C_{1(\alpha+1)}\backslash C_{1\alpha}$.   Then,  $G = \left[\begin{array}{c}
\bar G\\
\bar G\Delta\end{array}\right]$ is a generator matrix of the $[n,d]$ RS code
$C_{0d}$.
\end{theorem}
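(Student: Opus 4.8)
The plan is to show the two inclusions \(\mathrm{rowspace}(G)\subseteq C_{0d}\) and \(\dim\mathrm{rowspace}(G)=d\), after which \(\mathrm{rowspace}(G)=C_{0d}\) follows at once from \(\dim C_{0d}=n-\deg g_{0d}=d\). Throughout I would work with the spectral (transform) description of the codes: to a vector \(\vv=(v_0,\ldots,v_{n-1})\) associate \(\hat v_i=\sum_j v_j a^{ij}\), so that \(\vv\in C_{0y}\) iff \(\hat v_i=0\) for \(i=0,\ldots,n-y-1\), and \(\vv\in C_{1y}\) iff \(\hat v_i=0\) for \(i=1,\ldots,n-y\). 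In these terms the hypotheses read: every \(\ccc\in C_{0\alpha}\) has \(\hat{\ccc}\) supported in \(\{n-\alpha,\ldots,n-1\}\), while \(\mathbf{b}=(b_0,\ldots,b_{n-1})\in C_{1(\alpha+1)}\) has \(\hat{\mathbf b}\) supported in \(\{0\}\cup\{n-\alpha,\ldots,n-1\}\). This uses that \(a\) has order \(n\); for non-primitive lengths the same statements hold after viewing the codes in their equivalent generalized-RS/evaluation form.

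For \(\mathrm{rowspace}(G)\subseteq C_{0d}\) I would treat the two row blocks separately. The rows of \(\bar G\) lie in \(C_{0\alpha}\), and since \(d>\alpha\) the defining set \(\{0,\ldots,n-d-1\}\) of \(C_{0d}\) is contained in \(\{0,\ldots,n-\alpha-1\}\); hence \(C_{0\alpha}\subseteq C_{0d}\) and the top block is done. The substantive part is the bottom block: each of its rows is the componentwise product \(\ccc\Delta=\ccc*\mathbf b\) of a row \(\ccc\) of \(\bar G\) with \(\mathbf b\). By the convolution property of the transform, \(\widehat{(\ccc*\mathbf b)}\) equals, up to a nonzero scalar, the cyclic convolution of \(\hat{\ccc}\) and \(\hat{\mathbf b}\), so its support lies in the mod-\(n\) sumset of the two supports, namely \(\{n-\alpha,\ldots,n-1\}+\bigl(\{0\}\cup\{n-\alpha,\ldots,n-1\}\bigr)=\{n-2\alpha,\ldots,n-1\}\) (here \(2\alpha=d\le n-1\) keeps the wraparound clean). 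Thus \(\widehat{(\ccc*\mathbf b)}_i=0\) for \(i=0,\ldots,n-d-1\), i.e.\ \(\ccc*\mathbf b\in C_{0d}\). I expect this Hadamard-product step to be the main obstacle, since it is where one must see that the hypothesis \(\mathbf b\in C_{1(\alpha+1)}\) is exactly the condition keeping the product's spectrum inside the window that \(C_{0d}\) permits; getting the index bookkeeping of the sumset right (including the \(\alpha=1\) edge case) is the delicate bit.

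For \(\dim\mathrm{rowspace}(G)=d\) I would prove the \(d=2\alpha\) rows of \(G\) are linearly independent. Suppose \(\xx\bar G+\yy(\bar G\Delta)=\0\), and set \(\ccc_1=\xx\bar G\) and \(\ccc_2=\yy\bar G\), both in \(C_{0\alpha}\); in characteristic two this gives \(\ccc_1=\ccc_2\Delta\). If \(\ccc_2\ne\0\) then \(\ccc_2\Delta=\ccc_1\in C_{0\alpha}\), contradicting the hypothesis that \(\ccc\Delta\notin C_{0\alpha}\) for every nonzero \(\ccc\in C_{0\alpha}\). Hence \(\ccc_2=\0\); since \(\bar G\) has full row rank \(\alpha\), \(\yy=\0\), and then \(\ccc_1=\0\) forces \(\xx=\0\). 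This is precisely where the third hypothesis enters, and it is otherwise a routine linear-algebra argument.

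Combining the two parts, \(\mathrm{rowspace}(G)\subseteq C_{0d}\) with matching dimension \(d\), so \(\mathrm{rowspace}(G)=C_{0d}\) and \(G\) is a generator matrix of \(C_{0d}\). The remaining hypotheses play supporting rather than central roles in this dimension argument: the distinctness \(b_i\ne b_j\) guarantees the diagonal of \(\Delta\) is nondegenerate (as the regeneration property requires), and \(\mathbf b\notin C_{1\alpha}\), equivalently \(\hat b_{n-\alpha}\ne0\), ensures the product spectrum actually reaches position \(n-2\alpha\), keeping the construction consistent with the full dimension \(d\).
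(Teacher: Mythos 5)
Your proof is correct, and it handles the crucial membership step by a genuinely different mechanism than the paper. The paper never invokes the convolution theorem: to show each row of $\bar G\Delta$ lies in $C_{0d}$, it takes the canonical parity-check matrix $H_d$ of $C_{0d}$ (the Vandermonde matrix on powers $0,\ldots,n-d-1$), exploits the diagonality of $\Delta$ to write $\bar G\Delta H_d^T=\bar G\left(H_d\Delta\right)^T$, and then shows every row of $H_d\Delta$ is a codeword of the dual code $\hat C_{0\alpha}$, using root-reciprocity of dual generator polynomials to reduce this to the power-sum conditions $\sum_{j=0}^{n-1}b_j(a^{\ell'})^j=0$ for $1\le \ell'\le n-\alpha-1$, i.e.\ precisely $\mathbf{b}\in C_{1(\alpha+1)}$. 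Your Mattson--Solomon route --- the spectrum of the Hadamard product $c\Delta$ is the cyclic convolution of the spectra, so its support lies in the sumset $\{n-\alpha,\ldots,n-1\}+\left(\{0\}\cup\{n-\alpha,\ldots,n-1\}\right)=\{n-d,\ldots,n-1\}$, using $d\le n-1$ to control the wraparound --- verifies the same fact more directly and with transparent index bookkeeping, whereas the paper's dual-code computation \emph{derives} the condition on $\mathbf{b}$ rather than merely checking it. On independence the two arguments agree in substance (both reduce to $C_{0\alpha}\cap C_\Delta=\{\0\}$, which is the third hypothesis), but yours is slightly tighter: the paper asserts without justification that the rows of $\bar G\Delta$ are linearly independent, while your contradiction argument needs neither $\Delta$ invertible nor $\bar G\Delta$ full-rank a priori. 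Your closing remarks also align with the paper's own digression: the paper takes the specific codeword $(a^0,a^1,\ldots,a^{n-1})$ --- whose spectrum is a single spike --- to show that $\mathbf{b}\in C_{1\alpha}$ would force $c\Delta\in C_{0\alpha}$ for that row, so the exclusion $\mathbf{b}\notin C_{1\alpha}$ is indeed subsumed by the third hypothesis, and the distinctness of the $b_i$ is not used in proving this theorem. You additionally make explicit the final dimension count ($d=2\alpha$ independent rows inside the $d$-dimensional $C_{0d}$), which the paper leaves implicit.
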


\begin{proof}
We need to prove that each row of $\bar G\Delta$ is a codeword of $C_{0d}$ and
all rows in $G$ are linearly independent. Let $\hat C_{0\alpha}$ be the  dual
code of $C_{0\alpha}$. It is well-known that $\hat C_{0\alpha}$ is an
$[n,n-\alpha]$ RS code~\cite{LIN04,MOO05}. Similarly, let $\hat C_{0d}$ be the
dual code of  $C_{0d}$ and its generator matrix be $H_d$. Note that $H_d$ is a
parity-check matrix of $C_{0d}$. Let $h_d(x)=(x^n-1)/g_{0d}(x)$ and
$h_\alpha(x)=(x^n-1)/g_{0\alpha}(x)$. Then, the roots of $h_d(x)$ and
$h_\alpha(x)$ are $a^{n-d},a^{n-d+1},\ldots,a^{n-1}$ and
$a^{n-\alpha},a^{n-\alpha+1},\ldots,a^{n-1}$, respectively. Since an RS code is
also a cyclic code,  the generator polynomials of $\hat C_{0d}$ and $\hat
C_{0\alpha}$ are $\hat h_d(x)$ and $\hat h_\alpha(x)$, respectively, where
$\hat h_d(x)=x^{n-d}h_d(x^{-1})$ and $\hat
h_\alpha(x)=x^{n-\alpha}h_\alpha(x^{-1})$. Clearly, the roots of $\hat
h_d(x)$ are $a^{-(n-d)},a^{-(n-d+1)},\ldots,a^{-(n-1)}$  that are equivalent to
$a^{d},a^{d-1},\ldots, a^1$. Similarly, the roots of $\hat h_\alpha(x)$ are
$a^{\alpha},a^{\alpha-1},\ldots, a^1$. Since $\hat h_d(x)$ has roots of
$a^{d},a^{d-1},\ldots, a^1$, we can choose
\begin{eqnarray}
H_d&=&\left[\begin{array}{cccc}
1&1&\cdots&1\\
a^0&a^1&\cdots&a^{n-1}\\
(a^0)^2&(a^1)^2&\cdots&(a^{n-1})^2\\
&&\vdots&\\
(a^0)^{n-d-1}&(a^1)^{n-d-1}&\cdots&(a^{n-1})^{n-d-1}\end{array}\right]
\end{eqnarray}
as the generator matrix of $\hat C_{0d}$. To prove that each row of $\bar
G\Delta$ is a codeword of the RS code $C_{0d}$ generated by $G$, it is sufficient to show
that $\bar G\Delta H_d^T=\0$. From the symmetry of $\Delta$, we have
$$\bar G\Delta H_d^T=\bar G\left(H_d\Delta\right)^T.$$ Thus, we only need to prove that  each row of $H_d\Delta$ is a
codeword in $\hat C_{0\alpha}$. Let the diagonal elements of $\Delta$ be $b_0,
b_1,\ldots,b_{n-1}$. The $i$th row of $H_d\Delta$ is thus
$r_i(x)=\sum_{j=0}^{n-1}b_j(a^j)^{i-1}x^j$ in the polynomial representation.
Let  $(b_0,
b_1,\ldots,b_{n-1})$ be a codeword in
$C_{1(\alpha+1)}$. Then, we have
\begin{eqnarray}
\label{eq:delta}
\sum_{j=0}^{n-1}b_j(a^{\ell'})^j=0\mbox{ for }1\le \ell'\le n-\alpha-1~.
\end{eqnarray}
Substituting $x=a^\ell$, for $1\le \ell\le \alpha$, into $r_i(x)$, it becomes
\begin{eqnarray}
r_i(a^\ell)=\sum_{j=0}^{n-1}b_j(a^j)^{i-1}(a^\ell)^j=\sum_{j=0}^{n-1}b_j(a^{i-1+\ell})^j~.\label{eq:b}
\end{eqnarray}
Let $\ell'=i-1+\ell$. Since $1\le i\le n-d$ and $1\le \ell\le \alpha$, $1\le \ell'\le n-\alpha-1$. By ~\eqref{eq:delta}, $r_i(a^\ell)=0$ for $1\le i\le n-d$ and $1\le \ell\le \alpha$. Hence,
 each row of $H_d\Delta$ is a codeword in $\hat C_{0\alpha}$.

The  $b_i$s need to make all rows in $G$ linearly independent. Since all rows
in $\bar G$ or those in $\bar G\Delta$ are linearly independent, it is sufficient
to prove that $C_{0\alpha}\cap C_{\Delta}=\{\0\}$, where $C_\Delta$ is the code
generated by $\bar G\Delta$. Let $\ccc'$ be a codeword in $C_\Delta$.
$\ccc'=\ccc\Delta$ for some $\ccc\in C_{0\alpha}$.
 It can be shown that,
by the Mattson-Solomon polynomial~\cite{MAC77},
 we can choose
\begin{eqnarray}
\label{MSR-G-bar}
\bar G&=&\left[\begin{array}{cccc}
(a^0)^1&(a^1)^1&\cdots&(a^{n-1})^1\\
(a^0)^2&(a^1)^2&\cdots&(a^{n-1})^2\\
&&\vdots&\\
(a^0)^{\alpha}&(a^1)^{\alpha}&\cdots&(a^{n-1})^{\alpha}\end{array}\right]
\end{eqnarray}
as the generator matrix of $C_{0\alpha}$. Then
$$\ccc'=\uu\bar G\Delta$$ for some $\uu=[u_0,u_1,\ldots,u_{\alpha}]$. Evaluating $\ccc'(x)$ at $a^{0},a^{1},\ldots,a^{n-\alpha-1}$ and putting them into a matrix form, we have
\begin{equation}
\uu\bar G\Delta\tilde G=\zz~,\label{uGD}
\end{equation}
where
$$\tilde G=\left[\begin{array}{cccc}
(a^0)^0&(a^1)^0&\cdots&(a^{n-\alpha-1})^0\\
(a^0)^1&(a^1)^1&\cdots&(a^{n-\alpha-1})^1\\
&&\vdots&\\
(a^0)^{n-1}&(a^1)^{n-1}&\cdots&(a^{n-\alpha-1})^{n-1}\end{array}\right]$$ and $\zz$ is an $(n-\alpha)$-dimensional vector. If $\zz=\0$, then $\ccc\Delta\in C_{0\alpha}$; otherwise, $\ccc\Delta\not\in C_{0\alpha}$. Taking transpose on both sizes of \eqref{uGD}, it becomes
\begin{eqnarray}
&&\tilde G^T\Delta\bar G^T\uu^T\nonumber\\
&=&\left[\begin{array}{cccc}
\sum_{j=0}^{n-1}b_ja^j&\sum_{j=0}^{n-1}b_j(a^2)^j&\cdots&\sum_{j=0}^{n-1}b_j(a^\alpha)^j\\
\sum_{j=0}^{n-1}b_j(a^2)^j&\sum_{j=0}^{n-1}b_j(a^3)^j&\cdots&\sum_{j=0}^{n-1}b_j(a^{\alpha+1})^j\\
&&\vdots&\\
\sum_{j=0}^{n-1}b_j(a^{n-\alpha})^j&\sum_{j=0}^{n-1}b_j(a^{n-\alpha+1})^j&\cdots&\sum_{j=0}^{n-1}b_j(a^{n-1})^j\end{array}\right]\left[\begin{array}{c}
u_0\\
u_1\\
\vdots\\
u_{\alpha-1}\end{array}\right]
=\zz^T~.\label{GDGuT}
\end{eqnarray}
Since $(b_0,b_1,\ldots,b_{n-1})\in C_{1(\alpha+1)}$,
\begin{eqnarray}
\label{eq:alpha}
\sum_{j=0}^{n-1}b_j(a^{\ell})^j=0\mbox{ for }1\le \ell\le n-\alpha-1~.
\end{eqnarray}
Substituting \eqref{eq:alpha} into \eqref{GDGuT} and taking out rows with all zeros, we have
{\footnotesize \begin{eqnarray}
&&\left[\begin{array}{ccccc}
0&0&\cdots&0&\sum_{j=0}^{n-1}b_j(a^{n-\alpha})^j\\
0&0&\cdots&\sum_{j=0}^{n-1}b_j(a^{n-\alpha})^j&\sum_{j=0}^{n-1}b_j(a^{n-\alpha+1})^j\\\
&&\vdots&\\
\sum_{j=0}^{n-1}b_j(a^{n-\alpha})^j&\sum_{j=0}^{n-1}b_j(a^{n-\alpha+1})^j&\cdots&\sum_{j=0}^{n-2}b_j(a^{n-2})^j&\sum_{j=0}^{n-1}b_j(a^{n-1})^j\end{array}\right]\left[\begin{array}{c}
u_0\\
u_1\\
\vdots\\
u_{\alpha-1}\end{array}\right]\nonumber\\
&=&\left[\begin{array}{c}
z_{n-2\alpha}\\
z_{n-2\alpha+1}\\
\vdots\\
z_{n\alpha-1}\end{array}\right]=\tilde \zz~.\label{GDGuT-2}
\end{eqnarray}}

If
$\sum_{j=0}^{n-1}b_j(a^{n-\alpha})^j=0$, i.e., $a^{n-\alpha}$ is a root of
$\sum_{j=0}^{n-1}b_jx^j$, then $\ccc'=[1,0,\ldots,0]\bar G\Delta\in C_{0\alpha}$ due to the fact that $\uu=[1,0,\ldots,0]$ makes $\tilde \zz=\0$ in~\eqref{GDGuT-2}. Thus, we need to exclude the codewords in $C_{1(\alpha+1)}$
that have $a^{n-\alpha}$ as a root. These codewords turn out to be in $C_{1\alpha}$. If $\sum_{j=0}^{n-1}b_j(a^{n-\alpha})^j\neq 0$, then it is clear that the only $\uu$ making $\tilde \zz=\0$ in~\eqref{GDGuT-2} is the all-zero vector. Hence, any $( b_0,b_1,\ldots,b_{n-1})\in
C_{1(\alpha+1)}\backslash C_{1\alpha}$ does not make $\tilde \zz$ zero except $\uu=\0$.
%
%
\end{proof}

\begin{corollary}
\label{col:G}
Under the condition that the RS  code is over $GF(2^m)$ for $m\ge \lceil \log_2
n\rceil$ and $\gcd(2^ m-1,\alpha)=1$, the diagonal elements of $\Delta$, $b_0,
b_1,\ldots, b_{n-1}$, can be
$$\gamma(a^0)^\alpha,\gamma(a)^\alpha,\gamma(a^2)^\alpha,\ldots,\gamma(a^{n-1})^\alpha~,$$
where $\gamma\in GF(2^m)\backslash\{\0\}$.
\end{corollary}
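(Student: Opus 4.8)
The plan is to read the corollary as a direct instantiation of Theorem~\ref{thm:MSR-encoding}: once we show that the proposed diagonal entries $b_i=\gamma(a^i)^\alpha$ satisfy the three hypotheses that the theorem imposes on $\Delta$, the conclusion that $G=\left[\begin{array}{c}\bar G\\ \bar G\Delta\end{array}\right]$ generates $C_{0d}$ follows for any generator matrix $\bar G$ of $C_{0\alpha}$. Writing $b_i=\gamma a^{i\alpha}$ with $\gamma\ne 0$, the whole task therefore reduces to verifying that (i) the $b_i$ are pairwise distinct, (ii) $(b_0,\ldots,b_{n-1})\in C_{1(\alpha+1)}\backslash C_{1\alpha}$, and (iii) $\ccc\Delta\not\in C_{0\alpha}$ for every nonzero $\ccc\in C_{0\alpha}$.

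For (i) I would argue with the exponents modulo the order $2^m-1$ of $a$. Since scaling by $\gamma\ne 0$ is injective, $b_i=b_j$ forces $a^{i\alpha}=a^{j\alpha}$, i.e. $(2^m-1)\mid(i-j)\alpha$. The hypothesis $\gcd(2^m-1,\alpha)=1$ makes $\alpha$ a unit modulo $2^m-1$, so this is equivalent to $(2^m-1)\mid(i-j)$; as $0\le|i-j|\le n-1<2^m-1$ (the code length cannot exceed $2^m-1$), we get $i=j$. Equivalently, $i\mapsto i\alpha$ permutes the residues modulo $2^m-1$, so the $a^{i\alpha}$ are just a reordering of distinct powers of $a$.

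The core computation is (ii). Setting $b(x)=\sum_{j=0}^{n-1}b_j x^j$, I would evaluate $b(a^\ell)=\gamma\sum_{j=0}^{n-1}(a^{\alpha+\ell})^j$ as a geometric sum. For the full-length code ($n=2^m-1$) every nonzero field element is an $n$th root of unity, so the sum vanishes whenever $a^{\alpha+\ell}\ne 1$, and equals $\gamma$ (because $n$ is odd in characteristic two, so $n$ copies of $1$ sum to $1$) when $a^{\alpha+\ell}=1$. For $1\le\ell\le n-\alpha-1$ the exponent $\alpha+\ell$ lies strictly between $0$ and $n$, hence $a^{\alpha+\ell}\ne 1$ and $b(a^\ell)=0$; by the root characterization established in the proof of Theorem~\ref{thm:MSR-encoding} this places $(b_0,\ldots,b_{n-1})$ in $C_{1(\alpha+1)}$. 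At $\ell=n-\alpha$ we have $\alpha+\ell=n\equiv 0$, so $a^{\alpha+\ell}=1$ and $b(a^{n-\alpha})=\gamma\ne 0$, which shows $(b_0,\ldots,b_{n-1})\not\in C_{1\alpha}$. Finally, for (iii), the proof of Theorem~\ref{thm:MSR-encoding} shows that $\ccc\Delta\in C_{0\alpha}$ for some nonzero $\ccc$ can occur only when $a^{n-\alpha}$ is a root of $b(x)$, i.e. only when $(b_0,\ldots,b_{n-1})\in C_{1\alpha}$; the computation just made rules this out, so (iii) holds. Invoking Theorem~\ref{thm:MSR-encoding} then finishes the proof.

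The step I expect to be delicate is (ii), and specifically the geometric-sum evaluation: it is where the field structure really enters, relying on $n=2^m-1$ so that $a^{\alpha+\ell}$ is always an $n$th root of unity (forcing the sum to collapse to $0$), on the characteristic-two fact that the degenerate sum equals $\gamma$ rather than $0$, and on pinning down the unique exponent $\ell=n-\alpha$ at which the sum fails to vanish. The $\gcd(2^m-1,\alpha)=1$ hypothesis does the separate job of guaranteeing the distinctness in (i); keeping these two roles of the hypotheses apart is the main bookkeeping to get right.
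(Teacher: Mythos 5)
Your steps (i) and (ii) are sound. For (i) you give essentially the paper's argument ($\gcd(2^m-1,\alpha)=1$ makes $a^\alpha$ primitive, hence the $\gamma(a^\alpha)^i$ distinct). For (ii) the paper proceeds more structurally: it observes that the matrix in \eqref{MSR-G-alpha} is a generator matrix of $C_{1(\alpha+1)}$, so every element of $C_{1(\alpha+1)}\backslash C_{1\alpha}$ has the form $b_i=\gamma(a^i)^\alpha+f_i$ with $(f_0,\ldots,f_{n-1})\in C_{1\alpha}$, and then simply chooses $f=\0$; your geometric-sum verification of the root conditions is a legitimate direct substitute, granted the full-length assumption $n=2^m-1$, which the paper's cyclic-code manipulations (e.g., $h_d(x)=(x^n-1)/g_{0d}(x)$ in the theorem's proof) implicitly require anyway.

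The genuine gap is step (iii). Theorem~\ref{thm:MSR-encoding} lists ``$\ccc\Delta\not\in C_{0\alpha}$ for all nonzero $\ccc\in C_{0\alpha}$'' as a \emph{separate hypothesis}, precisely because its proof does not derive this from $(b_0,\ldots,b_{n-1})\in C_{1(\alpha+1)}\backslash C_{1\alpha}$. What the theorem's proof actually establishes is only one direction, and only for special codewords: taking $\ccc$ to be a \emph{row} of the particular generator matrix \eqref{MSR-G-bar}, it shows that if $b(a^{n-\alpha})=0$ then the row $i=1$ satisfies $\ccc\Delta\in C_{0\alpha}$, whence codewords of $C_{1\alpha}$ must be excluded. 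It never shows the implication you invoke — that $\ccc\Delta\in C_{0\alpha}$ for an \emph{arbitrary} nonzero codeword $\ccc$ (a linear combination of rows) forces $a^{n-\alpha}$ to be a root of $b(x)$ — so your one-line discharge of (iii) cites the theorem for a statement it does not contain. This is exactly where the paper's own corollary proof does its real work: for $b_i=\gamma(a^i)^\alpha$ the polynomial representation of $\ccc\Delta$ is $\gamma\sum_{i=0}^{n-1}c_i(a^\alpha x)^i$, i.e., $\Delta$ acts as the substitution $x\mapsto a^\alpha x$; assuming $\ccc\Delta\in C_{0\alpha}$ then gives $c(x)$ the roots $a^\alpha,a^{\alpha+1},\ldots,a^{n-1}$ in addition to its own roots $a^0,\ldots,a^{n-\alpha-1}$, and since $n-1\ge d=2\alpha$ implies $n-\alpha-1\ge\alpha$, these sets together exhaust all $n$ powers of $a$, contradicting $\deg c(x)\le n-1$ for $\ccc\neq\0$. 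You must supply this (or an equivalent) argument; note that it exploits the specific multiplicative form of the $b_i$, which is not available for a generic element of $C_{1(\alpha+1)}\backslash C_{1\alpha}$. (The implication you wanted — that membership in $C_{1(\alpha+1)}\backslash C_{1\alpha}$ alone already forces $C_{0\alpha}\cap C_\Delta=\{\0\}$ — happens to be provable, e.g., via the Mattson--Solomon evaluation view where $\ccc\Delta$ corresponds to a product polynomial with a nonzero monomial of degree in $[\alpha+1,2\alpha]$, but that requires its own proof, which neither you nor the cited theorem provides.)
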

\begin{proof}
Note that one valid generator matrix of  $C_{1(\alpha+1)}$  is
\begin{eqnarray}
\label{MSR-G-alpha}
\left[\begin{array}{cccc}
1&1&\cdots&1\\
a^0&a^1&\cdots&a^{n-1}\\
(a^0)^2&(a^1)^2&\cdots&(a^{n-1})^2\\
&&\vdots&\\
(a^0)^{\alpha}&(a^1)^{\alpha}&\cdots&(a^{n-1})^{\alpha}\end{array}\right].
\end{eqnarray}
$(b_0,b_1,\ldots,b_{n-1})\in C_{1(\alpha+1)}\backslash C_{1\alpha}$ can be
represented as
$b_i=\gamma(a^i)^\alpha+f_i$, where $(f_0,f_1,\ldots,f_{n-1})\in C_{1,\alpha}$. Now choose $(f_0,f_1,\ldots,f_{n-1})$ to be all-zero codeword. Under the condition that the RS  code is over $GF(2^m)$ for $m\ge \lceil \log_2 n\rceil$ and $\gcd(2^m-1,\alpha)=1$, $\gamma(a^0)^\alpha,\gamma(a)^\alpha,\gamma(a^2)^\alpha,\ldots,\gamma(a^{n-1})^\alpha$ is equivalent to $\gamma(a^\alpha)^0,\gamma(a^\alpha)^1,\gamma(a^\alpha)^2,\ldots,\gamma(a^\alpha)^{n-1}$. If $a^\alpha$ is a generator of $GF(2^m)$, then all elements of  $\gamma(a^\alpha)^0,\gamma(a^\alpha)^1,\gamma(a^\alpha)^2,\ldots,\gamma(a^\alpha)^{n-1}$ are distinct.  It is well-known that $a^\alpha$ is a generator if $\gcd(2^m-1,\alpha)=1$.
\end{proof}

It is clear that by setting $\gamma=1$ in Corollary~\ref{col:G}, we obtain the
generator matrix $G$ given in~\eqref{MSR-encoding} first proposed
in~\cite{RAS11,HAN12-INFOCOM} as a special case.\footnote{Even though the roots in
$G$ given in ~\eqref{MSR-encoding} are different from those for the proposed
generator matrix, they generate equivalent RS codes.}

One advantage of the proposed scheme is that it can now operate on a smaller
finite field than that of the scheme in~\cite{RAS11,HAN12-INFOCOM}.  Another
advantage is that one can choose $\bar{G}$ (and $\Delta$ accordingly) freely as
long as $\bar{G}$ is the generator matrix of an $[n, \alpha]$ RS code. In
particular, as discussed in Section~\ref{SEC:Intro}, to minimize the update
complexity, it is desirable to choose a generator matrix that has the least
row-wise maximum Hamming weight. Next, we present a least-update-complexity
generator matrix that satisfies~\eqref{MSR-encoding}.

\begin{corollary}
Suppose $\Delta$ is chosen according to Corollary~\ref{col:G}. Let $\bar G$ be
the generator matrix associated with  a systematic  $[n,\alpha]$ RS code. That is,
\begin{eqnarray}
 {\bar G}
&=&\left[\begin{array}{cccccccccc}
b_{00}&b_{01}&b_{02}&\cdots&b_{0(n-\alpha-1)}&1&0&0&\cdots&0\\
b_{10}&b_{11}&b_{12}&\cdots&b_{1(n-\alpha-1)}&0&1&0&\cdots&0\\
b_{20}&b_{21}&b_{22}&\cdots&b_{2(n-\alpha-1)}&0&&1&\cdots&0\\
&\vdots&&&&\vdots&&&&\vdots\\
b_{(\alpha-1)0}&b_{(\alpha-1)1}&b_{(\alpha-1)2}&\cdots&b_{(\alpha-1)(n-\alpha-1)}&0&0&0&\cdots&1
\end{array}\right]~,\label{MSR-G-S}\end{eqnarray}
where
$$x^{n-\alpha+i}=u_i(x)g(x)+b_i(x)\mbox{ for } 0\le i\le \alpha-1$$
and
$$b_i(x)=b_{i0}+b_{i1}x+\cdots+b_{i(n-\alpha-1)}x^{n-\alpha-1}~.$$
Then,  $G=\left[\begin{array}{c}
\bar G\\
\bar G\Delta\end{array}\right]$ is a least-update-complexity generator matrix.
\end{corollary}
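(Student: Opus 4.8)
The statement has two parts: first, that the specified $G$ is a genuine MSR generator matrix (i.e.\ a generator matrix of $C_{0d}$), and second, that among all admissible generator matrices it has the smallest update complexity. The plan is to dispatch the first part by appeal to the results already proved, and then to reduce the optimality claim to a statement about the row weights of $\bar G$.

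For validity, recall that Corollary~\ref{col:G} produces, under the stated field and $\gcd$ hypotheses, a diagonal matrix $\Delta$ whose diagonal entries are distinct, which lies in $C_{1(\alpha+1)}\setminus C_{1\alpha}$, and which satisfies $\ccc\Delta\notin C_{0\alpha}$ for every nonzero $\ccc\in C_{0\alpha}$. These three properties are exactly the hypotheses on $\Delta$ in Theorem~\ref{thm:MSR-encoding}, and crucially they are properties of the code $C_{0\alpha}$ and of $\Delta$ alone, independent of which basis of $C_{0\alpha}$ we pick for $\bar G$. Since the systematic matrix in~\eqref{MSR-G-S} is, by its construction via $x^{n-\alpha+i}\equiv b_i(x)\pmod{g(x)}$, a generator matrix of $C_{0\alpha}$, Theorem~\ref{thm:MSR-encoding} applies verbatim and $G=\left[\begin{smallmatrix}\bar G\\ \bar G\Delta\end{smallmatrix}\right]$ generates $C_{0d}$. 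Thus $G$ is admissible, and by the discussion in Section~\ref{SEC:Intro} its update complexity equals the largest Hamming weight occurring among the rows of $G$.

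The core of the argument is then a two-sided bound on this maximum row weight. For the reduction, note that $\Delta$ is diagonal with all diagonal entries nonzero, so right multiplication by $\Delta$ merely rescales columns without altering supports; hence each row $\bar g_i\Delta$ of $\bar G\Delta$ has exactly the same support, and therefore the same weight, as the corresponding row $\bar g_i$ of $\bar G$. Consequently the maximum row weight of $G$ equals the maximum row weight of $\bar G$. For the lower bound, every row of $\bar G$ is a nonzero codeword of the $[n,\alpha]$ RS code $C_{0\alpha}$, which is MDS with minimum distance $n-\alpha+1$; hence every row of $\bar G$, and thus every row of $G$, has weight at least $n-\alpha+1$. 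This bound holds for every admissible choice of $\bar G$, so no generator matrix of the prescribed form can have update complexity below $n-\alpha+1$.

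It remains to check that the systematic choice attains this bound. Writing the generator in~\eqref{MSR-G-S} as $\bar G=[\,P\mid I_\alpha\,]$ with $P$ the $\alpha\times(n-\alpha)$ parity block, I would invoke the standard characterization of MDS codes: a systematic generator generates an MDS code if and only if every square submatrix of its parity block is nonsingular; in particular every entry of $P$ is nonzero. Each row of $\bar G$ therefore has exactly $n-\alpha$ nonzero parity entries plus the single $1$ from $I_\alpha$, i.e.\ weight exactly $n-\alpha+1$, matching the lower bound. By the reduction above, every row of $G$ then has weight $n-\alpha+1$ as well, so the update complexity of $G$ equals the minimum value $n-\alpha+1$, which establishes that $G$ is a least-update-complexity generator matrix. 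The step I expect to require the most care is the bookkeeping tying the informal notion of update complexity to the row weights of $G$: because the data matrix $U$ carries symmetric blocks $Z_1,Z_2$, a single off-diagonal message symbol populates two entries of $U$ and hence perturbs two rows of $C=UG$, so one must confirm that the quantity being minimized is indeed a monotone function of the row weights of $G$ (all of which the systematic construction drives down to $n-\alpha+1$ simultaneously) rather than of some finer statistic.
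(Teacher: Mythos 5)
Your proposal is correct and takes essentially the same approach as the paper, whose entire proof is the observation you establish in detail: each row of the systematic $\bar G$ is a nonzero codeword of the $[n,\alpha]$ MDS code and hence has Hamming weight exactly $n-\alpha+1$, the minimum possible, with the invertible diagonal $\Delta$ preserving row supports so the same holds for $\bar G\Delta$. Your additional scaffolding --- validity of $G$ via Theorem~\ref{thm:MSR-encoding} with basis-independence of the hypotheses on $\Delta$, the all-nonzero parity-block characterization of systematic MDS generators, the explicit matching lower bound, and the remark that update complexity is a monotone function of row weights --- simply makes rigorous what the paper leaves implicit.
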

\begin{proof}
The result holds since each row of $\bar G$ is a nonzero codeword with
the minimum Hamming weight $n-\alpha+1$.
\end{proof}

The update complexity adopted from~\cite{RAW11} is not equivalent to the maximum number of encoded symbols that
must be updated when a single data symbol is modified. If  the modified data symbol is located in the diagonal of $Z_1$ or $Z_2$,  $(n-\alpha+1)$ encoded symbols need to be updated; otherwise, there are two corresponding encoding symbols in $U$ modified such that $2(n-\alpha+1)$ encoded symbols need to be updated.
%

\subsection{Decoding Scheme for MSR Codes}
\label{SEC:decoding}
Unlike the decoding scheme in~\cite{HAN12-INFOCOM} that uses $[n,d]$ RS
code, we propose to use the subcode of the $[n,d]$ RS code, i.e., the
$[n,\alpha=k-1]$ RS code generated by $\bar{G}$, to perform data
reconstruction. The advantage of using  the $[n,k-1]$ RS code is two-fold.
First,  its error correction capability is higher. Specifically, it can tolerate
$\lfloor\frac{n-k}{2}\rfloor$ instead of $\lfloor\frac{n-d}{2}\rfloor$
errors. Second, it only requires the access of two additional storage nodes
(as opposed to $d-k+2=k$ nodes) for each extra error.

Without loss of generality, we assume that the data collector retrieves encoded
symbols from $k+2v$ ($v\ge 0$) storage nodes, $j_0,j_1,\ldots,j_{k+2v-1}$.  We
also assume that  there are $v$ storage nodes  whose received symbols are
erroneous.  The stored information on the $k+2v$ storage nodes are collected as
the  $k+2v$ columns in $Y_{\alpha\times (k+2v)}$. The $k+2v$ columns of $G$
corresponding to storage nodes $j_0,j_1,\ldots,j_{k+2v-1}$ are denoted as the
columns of $G_{k+2v}$.  First, we discuss data reconstruction when $v=0$. The
decoding procedure is similar to that in~\cite{RAS11}.

\paragraph*{No Error}
In this case, $v=0$ and there is no error in  $Y$.  Then,
\begin{eqnarray}
Y_{\alpha\times k}&=&UG_{k}\nonumber\\
&=&[Z_1Z_2]\left[\begin{array}{c}\bar{G}_{k}\nonumber\\
\bar{G}_{k}\Delta\end{array}\right]\\&=&[Z_1\bar{G}_{k}+Z_2\bar{G}_{k}\Delta]~.\label{UG-no-error}
\end{eqnarray}
Multiplying $\bar{G}_{k}^T$ to both sides of \eqref{UG-no-error}, we have~\cite{RAS11},
{\small \begin{eqnarray}
\bar{G}_{k}^TY_{\alpha\times k}&=&\bar{G}_{k}^TUG_{k}\nonumber\\
&=&[\bar{G}_{k}^TZ_1\bar{G}_{k}+\bar{G}_{k}^TZ_2\bar{G}_{k}\Delta]\nonumber\\
&=&P+Q\Delta~.\label{PQ-no-error}
\end{eqnarray}}

Since $Z_1$ and $Z_2$  are symmetric, $P$ and $Q$ are symmetric as well. The
$(i,j)$th element of $P+Q\Delta$, $1\le i,j\le k$ and $i\neq j$,  is
\begin{eqnarray}
p_{ij}+q_{ij}a^{(j-1)\alpha}~,\label{pq-ij}
\end{eqnarray}
and the $(j,i)$th element is given by
\begin{eqnarray}
p_{ji}+q_{ji}a^{(i-1)\alpha}~.\label{pq-ji}
\end{eqnarray}
Since $a^{(j-1)\alpha}\neq a^{(i-1)\alpha}$ for all $i\neq j$, $p_{ij}=p_{ji}$,
and $q_{ij}=q_{ji}$, combining \eqref{pq-ij} and \eqref{pq-ji}, the values of
$p_{ij}$ and $q_{ij}$ can be obtained. Note that we only obtain $k-1$ values
for each row of $P$ and $Q$ since no elements in the diagonal of $P$ or $Q$ are
obtained.

To decode $P$, recall that $P=\bar{G}_{k}^TZ_1\bar{G}_{k}$. $P$ can
be treated as a portion of the codeword vector, $\bar{G}_{k}^TZ_1\bar{G}$. By
the construction of $\bar{G}$, it is easy to see that $\bar{G}$ is a generator
matrix of the $[n,k-1]$ RS code. Hence, each row in the matrix
$\bar{G}_{k}^TZ_1\bar{G}$ is a codeword. Since we know $k-1$ components
in each row of $P$, it is possible to decode $\bar{G}_{k}^TZ_1\bar{G}$ by the
error-and-erasure decoder of the $[n,k-1]$ RS code.\footnote{ The
error-and-erasure decoder of an $[n,k-1]$ RS code can successfully decode a
received vector if $s+2v<n-k+2$, where $s$ is the number of erasure (no symbol)
positions, $v$ is the number of errors in the received portion of the received
vector, and $n-k+2$ is the minimum Hamming distance of the $[n,k-1]$ RS code.}

 Since one cannot locate any erroneous position from the decoded rows of $P$, the decoded $\alpha$ codewords are accepted as $\bar{G}_{k}^TZ_1\bar{G}$. By collecting the last $\alpha$ columns of $\bar{G}$ as $\bar{G}_\alpha$ to find its inverse (here it is an identity matrix), one can recover $\bar{G}_{k}^TZ_1$ from $\bar{G}_{k}^TZ_1\bar{G}$.
Since any $\alpha$ rows in $\bar{G}_{k}^T$ are independent and thus invertible, we can pick any $\alpha$ of them to recover $Z_1$.
$Z_2$ can be obtained similarly by $Q$.

It is not trivial to extend the above decoding procedure to the case of errors. The difficulty is raised from the fact that for any error in $Y_{\alpha\times n}$, this error will propagate into many places in $P$ and $Q$, due to operations involved in \eqref{PQ-no-error}, \eqref{pq-ij}, and \eqref{pq-ji}, such that many rows of them cannot be decoded successfully or correctly (Please refer to Lemma~\ref{lemma}). In the following we present how to locate erroneous columns in $Y$ based on RS decoder.

\vspace{0.5cm}

\noindent{\bf Single Error:} In this case, $v=1$ and only one column of  $Y_{\alpha\times(k+2)}$ is
erroneous. Without loss of generality, we assume the erroneous column is the
first column in $Y$. That is, the symbols received from storage node $j_0$
contain error. Let $E=\left[\eee^T_1|\0\right]$ be the error matrix, where
$\eee_1=[e_{11},e_{12,},\ldots, e_{1\alpha}]$ and $\0$ is all-zero matrix with
dimension $\alpha\times (k+1)$. Then
\begin{eqnarray}
Y_{\alpha\times(k+2)}&=&UG_{k+2}+E\nonumber\\
&=&[Z_1Z_2]\left[\begin{array}{c}\bar{G}_{k+2}\nonumber\\
\bar{G}_{k+2}\Delta\end{array}\right]+E\\&=&[Z_1\bar{G}_{k+2}+Z_2\bar{G}_{k+2}\Delta]+E~.\label{UG-error}
\end{eqnarray}
Multiplying $\bar{G}_{k+2}^T$ to both sides of \eqref{UG-error}, we have
 \begin{eqnarray}
\bar{G}_{k+2}^TY_{\alpha\times(k+2)}&=&\bar{G}_{k+2}^TUG_{k+2}+\bar{G}_{k+2}^TE\nonumber\\
&=&[\bar{G}_{k+2}^TZ_1\bar{G}_{k+2}+\bar{G}_{k+2}^TZ_2\bar{G}_{k+2}\Delta]+\bar{G}_{k+2}^TE\nonumber\\
&=&P+Q\Delta+\left[\bar{G}_{k+2}^T\eee^T_1|\0\right]\nonumber\\
&=&\tilde{P}+\tilde{Q}\Delta~.\label{PQ-error}
\end{eqnarray}

It is easy to see that the errors only affect the first column of
$\tilde{P}+\tilde{Q}\Delta$ since the nonzero elements are all in the first
column of $\left[\bar{G}_{k+2}^T\eee^T_1|\0\right]$.  Similar to  \eqref{pq-ij}
and \eqref{pq-ji}, the values of $\tilde{p}_{ij}$ and $\tilde{q}_{ij}$, where
$i\neq j$, are obtained from $\bar{G}_{k+2}^TY_{\alpha\times(k+2)}$ even though there are some
errors in them. Note that we only obtain $k+1$ values for each row of
$\tilde{P}$ and $\tilde{Q}$. Since the $(j,1)$th elements of $\bar{G}_{k+2}^TY_{\alpha\times(k+2)}$
may be erroneous for $1\le j\le k+2$, the values calculated from them contain
errors as well. Then the first column and the first row of $\tilde{P}$
($\tilde{Q}$)  have errors. Note that each row of $\tilde{P}$ ($\tilde{Q}$)
has only at most one error except the first row.

First, we decode $\tilde{P}$. Recall that $P=\bar{G}_{k+2}^TZ_1\bar{G}_{k+2}$.
As mentioned earlier, $P$ can be treated as a portion of the codeword vector
$\bar{G}_{k+2}^TZ_1\bar{G}$, and  then $\tilde{P}$ can be decoded by  the
$[n,k-1]$ RS code.  Since we have obtained $k+1$ components in each row of
$\tilde{P}$, it is possible to correctly decode each row of
$\bar{G}_{k+2}^TZ_1\bar{G}$, except for the first row of $\tilde{P}$, using the
error-and-erasure decoder of the RS code.

Let $\hat{P}$ be the corresponding portion of decoded codeword vector to
$\tilde{P}$ and  $E_P=\hat{P}\oplus \tilde{P}$ be the error pattern vector.
Next we describe how to  locate the incorrect  row after decoding every row (in
this case we assume that the error occurs in the first row).  Now suppose that there are more than two
errors in the first column of $\tilde{P}$.\footnote{It will be shown later that
the number of errors in the first column of $\tilde{P}$ is at least three.} Let
these errors be in $(j_1,1)$th, $(j_2,1)$th,$\cdots$, and $(j_\ell,1)$th
positions in $\tilde{P}$. After decoding all rows of $\tilde{P}$, it is easy to
see that all rows but the first row can be decoded correctly due to at most
one error occurring in each row.
Then one can confirm that the number of nonzero elements in $E_P$ in the first column is at least three since only the error in the first position of the first column can be decoded incorrectly. Other than the first column in $E_P$ there is at most one nonzero element in rest of the columns.
Then the first column in $\hat{P}$ has correct elements except the one in the first row. Just copy all elements in the first column of $\hat{P}$ to those corresponding positions of its first row to make $\hat{P}$  a symmetric matrix. We then collect any $\alpha$ columns of $\hat{P}$  except the first column as $\hat{P}_\alpha$ and find its corresponding $\bar{G}_\alpha$. By multiplying  the inverse of $\bar{G}_\alpha$ to $\hat{P}_\alpha$, one can recover $\bar{G}_{k+2}^TZ_1$.
Since any $\alpha$ rows in $\bar{G}_{k+2}^T$ are independent and thus invertible, we can pick any $\alpha$ of them to recover $Z_1$.
$Z_2$ can be obtained similarly by $Q$.

\vspace{0.5cm}

\noindent{\bf Multiple  Errors:}
Before presenting the proposed decoding algorithm, we first prove that a
decoding procedure can always successfully decode $Z_1$ and $Z_2$ if $v\le
\lfloor\frac{n-k}{2}\rfloor$ and all storage nodes are accessed.
Assume the storage nodes with errors correspond to the $\ell_0$th,
$\ell_1$th, $\ldots$, $\ell_{v-1}$th columns in the received matrix
$Y_{\alpha\times n}$. Then,
 \begin{eqnarray}
&&\bar{G}^TY_{\alpha\times n}\nonumber\\
&=&\bar{G}^TUG+\bar{G}^TE\nonumber\\
&=&\bar{G}^T[Z_1Z_2]\left[\begin{array}{c}\bar{G}\nonumber\\
\bar{G}\Delta\end{array}\right]+\bar{G}^TE\\&=&[\bar{G}^TZ_1\bar{G}+\bar{G}^TZ_2\bar{G}\Delta]+\bar{G}^TE~,\label{UG-error-2}
\end{eqnarray}
where
{\small $$E=\left[\0_{\alpha\times (\ell_{0}-1)}|\eee^T_{\ell_0}|\0_{\alpha\times (\ell_{1}-\ell_{0}-1)}|\cdots|\eee^T_{\ell_{v-1}}|\0_{\alpha\times (n-\ell_{v-1})}\right]~.$$}
\begin{lemma}
\label{lemma}
There are at least $n-k+2$ errors in each of the $\ell_0$th, $\ell_1$th, $\ldots$, $\ell_{v-1}$th columns of $\bar{G}^TY_{\alpha\times n}$.
\end{lemma}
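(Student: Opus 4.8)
The plan is to show that left‑multiplication by $\bar G^T$ converts the column‑localized error matrix $E$ into a matrix whose corrupted columns are (transposed) \emph{codewords} of the $[n,\alpha]$ RS code $C_{0\alpha}$ generated by $\bar G$, and then to apply the MDS minimum‑distance bound. Recall that here $\alpha=k-1$, so $C_{0\alpha}$ is the $[n,k-1]$ RS code.

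First I would note that $E$ is nonzero only in columns $\ell_0,\ell_1,\ldots,\ell_{v-1}$, hence $\bar G^T E$ is supported on exactly those same columns; it therefore suffices to bound the Hamming weight of one corrupted column $\ell_j$. Writing $\gggg_i^T$ for the $i$th column of $\bar G$, a direct entrywise computation gives that the $i$th entry of the $\ell_j$th column of $\bar G^T E$ equals $\eee_{\ell_j}\gggg_i^T$, which is precisely the $i$th coordinate of the codeword $\eee_{\ell_j}\bar G$. In other words, the $\ell_j$th column of $\bar G^T E$ is the transpose of the codeword obtained by encoding the error vector $\eee_{\ell_j}$ with $\bar G$.

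Next, since the nodes indexed by $\ell_0,\ldots,\ell_{v-1}$ are by assumption the erroneous ones, we have $\eee_{\ell_j}\neq\0$, and because $\bar G$ is a generator matrix (full row rank) the codeword $\eee_{\ell_j}\bar G$ is a nonzero element of $C_{0\alpha}$. An $[n,\alpha]$ RS code is MDS, so its minimum distance is $n-\alpha+1=n-k+2$. Hence the nonzero codeword $\eee_{\ell_j}\bar G$ has weight at least $n-k+2$, i.e. at least $n-k+2$ of its coordinates are nonzero. These nonzero coordinates are exactly the errors injected into the $\ell_j$th column of $\bar G^T Y_{\alpha\times n}=P+Q\Delta+\bar G^T E$, which establishes the claim for every $j$.

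The only delicate point—more bookkeeping than genuine obstacle—is tracking the transpose: one must verify carefully that left‑multiplication by $\bar G^T$ sends the column error $\eee^T_{\ell_j}$ to the transposed codeword $\eee_{\ell_j}\bar G$ rather than to some other vector, so that the RS minimum‑distance bound is applied to a bona fide codeword of $C_{0\alpha}$. Once this identification is in place, the weight bound $n-k+2$ follows immediately from the MDS property of $\bar G$ already used in the decoding analysis.
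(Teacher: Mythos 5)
Your proof is correct and follows essentially the same route as the paper's: both identify the $\ell_j$th column of $\bar G^T E$ as $\bar G^T\eee^T_{\ell_j}=\left(\eee_{\ell_j}\bar G\right)^T$, observe that $\eee_{\ell_j}\bar G$ is a nonzero codeword of the $[n,\alpha=k-1]$ RS code generated by $\bar G$, and invoke its minimum distance $n-k+2$. Your additional remarks (full row rank of $\bar G$ forcing the codeword to be nonzero, and the support of $\bar G^T E$ being confined to the erroneous columns) are correct details the paper leaves implicit, not a different argument.
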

\begin{proof}
From \eqref{UG-error-2}, we have
$$\bar{G}^TY_{\alpha\times n}=P+Q\Delta+\bar{G}^TE.$$
The error vector in $\ell_j$th column is then
\begin{equation}\bar{G}^T\eee^T_{\ell_j}=\left(\eee_{\ell_j}\bar{G}\right)^T~.\label{eG}\end{equation}
Since $\bar{G}$ is a generator matrix of the $[n,k-1]$ RS code, $\eee_{\ell_j}\bar{G}$ in \eqref{eG} is a nonzero codeword in the  RS code. Hence, the number of nonzero symbols in $\eee_{\ell_j}\bar{G}$ is at least $n-k+2$, the minimum Hamming distance of the RS code.
\end{proof}
We next have the main theorem to perform data reconstruction.
\begin{theorem}
\label{thm:main}
Let $\bar{G}^TY_{\alpha\times n}=\tilde{P}+\tilde{Q}\Delta$. Furthermore, let
$\hat{P}$ be the corresponding portion of decoded codeword vector to
$\tilde{P}$ and  $E_P=\hat{P}\oplus \tilde{P}$ be the error pattern vector.
Assume that the data collector accesses all storage nodes and there are $v$,  $1\le
v\le \lfloor\frac{n-k}{2}\rfloor$, of them with errors. Then, there are at
least $n-k+2-v$ nonzero elements in $\ell_{j}$th column of $E_P$,  $0\le j\le
v-1$, and at most $v$ nonzero elements in the rest of the columns of $E_P$.
\end{theorem}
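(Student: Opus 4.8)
The plan is to trace how the error matrix propagates through the forced‑symmetric decomposition and the row‑wise RS decoder, and then to count the nonzero entries of $E_P$ one column at a time. Writing $\bar G^TY_{\alpha\times n}=P+Q\Delta+\bar G^TE$ with $P=\bar G^TZ_1\bar G$ and $Q=\bar G^TZ_2\bar G$ (both symmetric), I would first invoke Lemma~\ref{lemma}: each error column $\ell_j$ of $\bar G^TE$ equals the codeword $(\eee_{\ell_j}\bar G)^T$ and therefore carries at least $n-k+2$ nonzero entries, while every column indexed outside the error set $L:=\{\ell_0,\dots,\ell_{v-1}\}$ is identically zero. This density guarantee on the error columns is what ultimately forces the lower bound.

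Next I would examine the $2\times2$ solve that defines $\tilde P,\tilde Q$. Setting $u=(\bar G^TE)_{ij}$ and $w=(\bar G^TE)_{ji}$, the symmetry‑constrained solution gives, for $i\neq j$,
\[
\tilde p_{ij}-p_{ij}=\frac{w\,b_{j-1}-u\,b_{i-1}}{\,b_{j-1}-b_{i-1}\,},
\]
so $\tilde p_{ij}$ departs from the true value exactly where $\bar G^TE$ is nonzero at $(i,j)$ or $(j,i)$: the denominator is nonzero because the diagonal entries of $\Delta$ are distinct, and the numerator does not vanish because (as in the construction of Corollary~\ref{col:G}) those entries are nonzero. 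Hence a row $i\notin L$ only sees the contributions with $j\in L$, i.e.\ at most $v$ erroneous coordinates, plus the single erased diagonal coordinate; since these stay within the minimum distance $n-k+2$ of the $[n,k-1]$ code, the error‑and‑erasure decoder returns the true codeword, so $\hat P$ agrees with $P$ on every row indexed outside $L$. The rows indexed by $L$ sit far beyond the correction radius, but there are only $v$ of them and I will not need to control their decoded output.

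With this dichotomy the two bounds follow by counting. For a column $m\notin L$: each row $i\notin L$ decodes correctly and already carries no error at $(i,m)$ (both $i,m\notin L$), so $(E_P)_{i,m}=0$; only the $v$ rows indexed by $L$ can be nonzero, giving at most $v$ nonzero entries. For an error column $\ell_j$: in every row $i\notin L$ the decoder is correct, so $(E_P)_{i,\ell_j}=p_{i\ell_j}\oplus\tilde p_{i\ell_j}$, which is nonzero precisely where the codeword $\eee_{\ell_j}\bar G$ is nonzero. That codeword has at least $n-k+2$ nonzero positions, of which at most $|L|=v$ can lie in $L$, so at least $n-k+2-v$ of them correspond to rows $i\notin L$, each contributing a nonzero entry to column $\ell_j$ of $E_P$. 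This yields the claimed $\ge n-k+2-v$ and $\le v$ bounds, and since $2v\le n-k+1$ we get $n-k+2-v>v$, the separation that makes the error columns identifiable.

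The main obstacle is the middle step: certifying that every row indexed outside $L$ decodes to the correct codeword. This requires pinning down the exact error support of $\tilde P$ from the symmetric solve (the displayed identity), confirming the numerator's non‑vanishing from the nonzero distinct $\Delta$‑entries, and then carefully bookkeeping the at most $v$ errors together with the single erased diagonal coordinate against the correction radius of the $[n,k-1]$ code. Once these clean rows are guaranteed correct, the counting is immediate; the delicate part is entirely in controlling the interplay between the $2\times2$ inversion and the erasure at the diagonal.
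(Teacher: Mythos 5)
Your proof is correct and follows essentially the same route as the paper's: it invokes Lemma~\ref{lemma} for the weight-$(n-k+2)$ error columns of $\bar{G}^T Y_{\alpha\times n}$, observes that the pairwise symmetric solve corrupts $\tilde{p}_{ij}$ only through nonzero entries of $\bar{G}^T E$ at $(i,j)$ or $(j,i)$, certifies that rows outside the error set $L$ decode correctly (at most $v$ errors plus the erased diagonal against the $[n,k-1]$ code's radius, exactly the paper's implicit bookkeeping), and then counts at least $n-k+2-v$ surviving nonzeros in each column $\ell_j$ versus at most $v$ elsewhere. One caveat: your blanket claim that the numerator $w\,b_{j-1}-u\,b_{i-1}$ cannot vanish is false when both $u\neq 0$ and $w\neq 0$ --- the paper explicitly concedes that such paired errors "might accidentally be correct," which is precisely why its bound is $n-k+2-v$ rather than $n-k+2$ --- but this overstatement is harmless to your argument, since every position you actually count lies in a row $i\notin L$, where $w=0$ and the discrepancy reduces to $-u\,b_{i-1}\neq 0$, and your lower bound already discards the at most $v$ positions with row index in $L$ where cancellation could occur.
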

\begin{proof}
Let us focus on the $\ell_{j}$th column of $E_P$. By Lemma~\ref{lemma}, there are at least $n-k+2$ errors in the $\ell_{j}$th column of $\bar{G}^TY_{\alpha\times n}$. $\tilde{P}$ is constructed from $\bar{G}^TY_{\alpha\times n}$ based on \eqref{pq-ij} and \eqref{pq-ji}. If there is only one value of \eqref{pq-ij} and \eqref{pq-ji} that is in error, then the constructed $p_{ij}$ and $q_{ij}$ will be in error. However, when both values are in error, $p_{ij}$ and $q_{ij}$ might accidentally be correct. Among those  $n-k+2$ erroneous positions, there are at least $n-k+2-v$ positions in error after constructing $\tilde{P}$ since at most $v$ errors can be corrected in constructing $\tilde{P}$. It is easy to see that at least $n-k+2-v$ positions are in error that are not among any of the $\ell_0$th, $\ell_1$th, $\ldots$, $\ell_{v-1}$th elements in the $\ell_{j}$th column. These errors are in rows that can be decoded correctly. Hence, there are at least $n-k+2-v$ errors that can be located in $\ell_{j}$th column of $\tilde{P}$ such that there are at least $n-k+2-v$ nonzero elements in the $\ell_{j}$th column of $E_P$. There are at most $v$ rows in $\tilde{P}$ that cannot be decode correctly due to having more than $v$ errors in each of them. Hence, other than those columns with errors in the original matrix $\bar{G}^TY_{\alpha\times n}$, at most $v$ errors will be found in each of the rest of the columns of $\tilde{P}$.
\end{proof}
The above theorem allows us to design a decoding algorithm that can correct up
to $\lfloor\frac{n-k}{2}\rfloor$ errors.\footnote{ In constructing
$\tilde{P}$ we only get $n-1$ values (excluding the diagonal). Since the
minimum Hamming distance of an $[n,k-1]$ RS code is $n-k+2$, the
error-and-erasure decoding can only correct up to
$\lfloor\frac{n-1-k+2-1}{2}\rfloor$ errors.}
In particular, we need to examine the erroneous positions in $\bar{G}^TE$.  Since $1\le v\le
\lfloor\frac{n-k}{2}\rfloor$, we have $n-k+2-v\ge
\lfloor\frac{n-k}{2}\rfloor+1>v$. Thus, the way to locate all erroneous
columns in $\tilde{P}$ is to find out all columns in $E_P$ where the number of
nonzero elements in them are greater than or equal to
$\lfloor\frac{n-k}{2}\rfloor+1$. After we locate all erroneous columns we can
follow a procedure similar to that given in the no error (or single error) case to recover $Z_1$ from
$\hat{P}$.

The above decoding procedure guarantees to recover $Z_1$ ($Z_2$) when all $n$ storage
nodes are accessed. However, it is not very efficient in terms of bandwidth usage.
Next, we present a progressive decoding version of the proposed algorithm that
only accesses enough extra nodes when necessary. Before presenting it, we
need the following corollary.
\begin{corollary}
\label{cor}
Consider that one accesses $k+2v$ storage nodes, among which $v$ nodes are
erroneous and  $1\le v\le \lfloor\frac{n-k}{2}\rfloor$. There are at least
$v+2$ nonzero elements in the $\ell_{J}$th column of $E_P$,  $0\le j\le v-1$, and
at most $v$ among the remaining columns of $E_P$.
\end{corollary}
\begin{proof}
This is a direct result from Theorem~\ref{thm:main} when we delete $n-(k+2v)$ elements in each column of $E_P$ according to the size of $Y_{\alpha\times (k+2v)}$ and $n-k+2-v-\{n-(k+2v)\}=v+2$.
\end{proof}

Based on Corollary~\ref{cor}, we can design a progressive decoding
algorithm~\cite{HAN12} that retrieves extra data from the remaining storage nodes
when necessary. To handle Byzantine fault tolerance, it is necessary to perform
integrity check after the original data is reconstructed.  Two verification
mechanisms have been suggested in~\cite{HAN12-INFOCOM}: cyclic redundancy check
(CRC) and cryptographic hash function. Both mechanisms introduce redundancy to
the original data before they are encoded and are suitable to be used in
combination with the decoding algorithm.

The progressive decoding algorithm starts by accessing $k$ storage nodes.
Error-and-erasure decoding succeeds only when there is no
error. If the integrity check passes, then the data collector recovers the
original data. If the decoding procedure  fails or the integrity check fails,
then the data collector retrieves  two more  blocks of data from the remaining storage
nodes. Since the data collector has $k+2$ blocks of  data, the error-and-erasure
decoding can correctly recover the original data if there is only one erroneous
storage node among the $k+1$ nodes accessed. If the integrity check passes,
then the data collector recovers the original data. If the decoding procedure
fails or the integrity check fails, then the data collector retrieves two more
blocks of  data from the remaining storage nodes. The data collector repeats the same
procedure until it recovers the original data or runs out of the storage nodes.
The detailed decoding procedure is summarized in
Algorithm~\ref{algo:reconstruction-MSR} and its corresponding flowchart is shown in Fig.~\ref{fig:A1}.

Next, we give an example for Algorithm~\ref{algo:reconstruction-MSR} based on a shortened RS code. Let $m = 3$, $n=5$, $k=3$, $\gamma=1$. Then $d = 4$, $\alpha=2$, and
$$ G = \left[\begin{array}{ccccc}
             3 & 5 & 7 & 1 & 0 \\
             2 & 5 & 6 & 0 & 1 \\
             3 & 2 & 4 & 5 & 0 \\
             2 & 2 & 2 & 0 & 2  \end{array} \right]~. $$
Let the information sequence $\m = \left[0~~4~~0~~3~~7~~7\right]$. Then
$$ U = \left[\begin{array}{cccc}
                         0 & 4 & 3 & 7 \\
                         4 & 0 & 7 & 7 \end{array} \right] $$
and  
$$ C = \left[\begin{array}{ccccc}
                         3 & 1 & 7 & 4 & 1 \\
                         0 & 2 & 5 & 2 & 5 \end{array} \right].$$
Assume that the first node is compromised and the vector that the data collector retrieves from the first three nodes for data reconstruction is
$$ Y_{\alpha \times j}  = \left[\begin{array}{ccc}
                         1 & 1 & 7  \\
                         4 & 2 & 5 \end{array} \right]. $$
At the very beginning, we assume that $v=0 \le \lfloor (n-k+1)/2 \rfloor $. 
By Equations~(\ref{UG-no-error}) to (\ref{pq-ji}), we can construct
$$ \tilde{P} = \left[\begin{array}{ccc}
                         0 & 7 & 6  \\
                         7 & 0 & 2  \\
                         6 & 2 & 0 \end{array} \right],
\tilde{Q} = \left[\begin{array}{ccc}
                         0 & 0 & 4  \\
                         0 & 0 & 5  \\
                         4 & 5 & 0 \end{array} \right].$$
We then progressively decode $\tilde{P}$ to obtain
$$ \hat{P} = \left[\begin{array}{ccc}
                         4 & 7 & 6  \\
                         7 & 3 & 2  \\
                         6 & 2 & 0 \end{array} \right]~.$$
Since $v=0$, we can find $\ell_e  = 0$ and  $\ell_c = 3$. Due to  $\ell_e = v$ and $\ell_c = k+v$, we construct
$$ \hat{P}_{\alpha}  =  \left[\begin{array}{cc}
                         4 & 7  \\
                         7 & 3  \end{array} \right]$$
and find 
$$ \bar{G}_{\alpha}  =  \left[\begin{array}{cc}
                         3 & 5  \\
                         2 & 5  \end{array} \right].$$
Finally, $Z_1$ can be recovered and $Z_2$ can be computed similarly as
$$ Z_1  =  \left[\begin{array}{cc}
                         5 & 0  \\
                         0 & 2  \end{array} \right],\ 
Z_2 =  \left[\begin{array}{cc}
                         5 & 5  \\
                         5 & 2  \end{array} \right]~.$$
Therefore, $\tilde{\m} = \left[5~~5~~2~~5~~0~~2\right]$. However, the integrity check of $\tilde{\m}$ fails because the result of the progressive decoding is not correct. The data collector needs to assign $j+2$ and $v+1$ to $j$ and $v$, respectively, and retrieve data from two more nodes. By following the same step as above, we obtain
$$ \tilde{P} = \left[\begin{array}{ccccc}
                         0 & 7 & 6 & 4 & 6  \\
                         7 & 0 & 2 & 2 & 2 \\
                         6 & 2 & 0 & 5 & 1 \\
                         4 & 2 & 5 & 0 & 4 \\
                         6 & 2 & 1 & 4 & 0  \end{array} \right],
\tilde{Q} = \left[\begin{array}{ccccc}
                         0 & 0 & 4 & 5 & 2 \\
                         0 & 0 & 5 & 2 & 0 \\
                         4 & 5 & 0 & 6 & 7 \\
                         5 & 2 & 6 & 0 & 7 \\
                         2 & 0 & 7 & 7 & 0 \end{array} \right],
\hat{P} = \left[\begin{array}{ccccc}
                         7 & 7 & 6 & 4 & 6  \\
                         2 & 0 & 2 & 2 & 2 \\
                         6 & 2 & 0 & 5 & 1 \\
                         3 & 2 & 5 & 0 & 4 \\
                         7 & 2 & 1 & 4 & 0 \end{array} \right]~.$$
Since now $v=1$, we can find $\ell_e  = 1$ and $\ell_c = 4.$ Accordingly, 
$$ \hat{P}_{\alpha}  =  \left[\begin{array}{cc}
                         0 & 2  \\
                         2 & 0  \end{array} \right], \ 
Z_1  =  \left[\begin{array}{cc}
                         0 & 4  \\
                         4 & 0  \end{array} \right],\ 
Z_2 =  \left[\begin{array}{cc}
                         3 & 7  \\
                         7 & 7  \end{array} \right]~.$$
The information sequence is recovered correctly, i.e., $\tilde{\m} = \left[0~~4~~0~~3~~7~~7\right]$.

\begin{algorithm}[h]
\Begin {
$v=0$; $j=k$;\\
The data collector randomly chooses $k$ storage nodes and retrieves encoded data,
$Y_{\alpha\times j}$;\\
\While {$v \le \lfloor\frac{n-k+1}{2}\rfloor$} {
Collect the $j$ columns of $\bar G$ corresponding to accessed storage nodes as  $\bar G_{j}$;\\
Calculate $\bar G_{j}^TY_{\alpha\times j}$;\\
Construct $\tilde{P}$  and  $\tilde{Q}$ by using \eqref{pq-ij} and \eqref{pq-ji};\\
Perform progressive error-and-erasure decoding on each row in $\tilde{P}$  to obtain $\hat{P}$;\\
Locate  erroneous columns in $\hat{P}$  by searching for columns of them with at least $v+2$ errors;
 assume that $\ell_e$ columns found in the previous action;\\
Locate columns in $\hat{P}$ with at most $v$ errors; assume that $\ell_c$ columns found in the previous action;\\
\If{($\ell_e=v$ and $\ell_c=k+v$)} {
Copy  the $\ell_e$ erronous columns of $\hat{P}$ to their corresponding rows to make $\hat{P}$  a symmetric matrix;\\
Collect any $\alpha$ columns in the above $\ell_c$ columns of $\hat{P}$ as $\hat{P}_\alpha$ and find its corresponding $\bar{G}_\alpha$;\\
Multiply  the inverse of $\bar{G}_\alpha$ to $\hat{P}_\alpha$ to recover $\bar{G}_{j}^TZ_1$;\\
Recover $Z_1$ by the inverse of any $\alpha$ rows of $\bar{G}_{j}^T$;\\
Recover $Z_2$ from $\tilde{Q}$ by the same procedure; Recover $\tilde{\m}$ from $Z_1$ and $Z_2$;\\
\If{ integrity-check($\tilde{\m}$) = SUCCESS} {
\Return $\tilde{\m}$;
} }
$j \leftarrow j+2$;\\
Retrieve $2$ more encoded data from remaining storage nodes and merge them into $Y_{\alpha\times j}$; $v\leftarrow v+1$;

}
\Return FAIL;
}
\caption{Decoding of MSR Codes Based on  $(n,k-1)$ RS Code for Data Reconstruction}
\label{algo:reconstruction-MSR}
\end{algorithm}

\begin{figure}
\centering
\includegraphics[width=12cm]{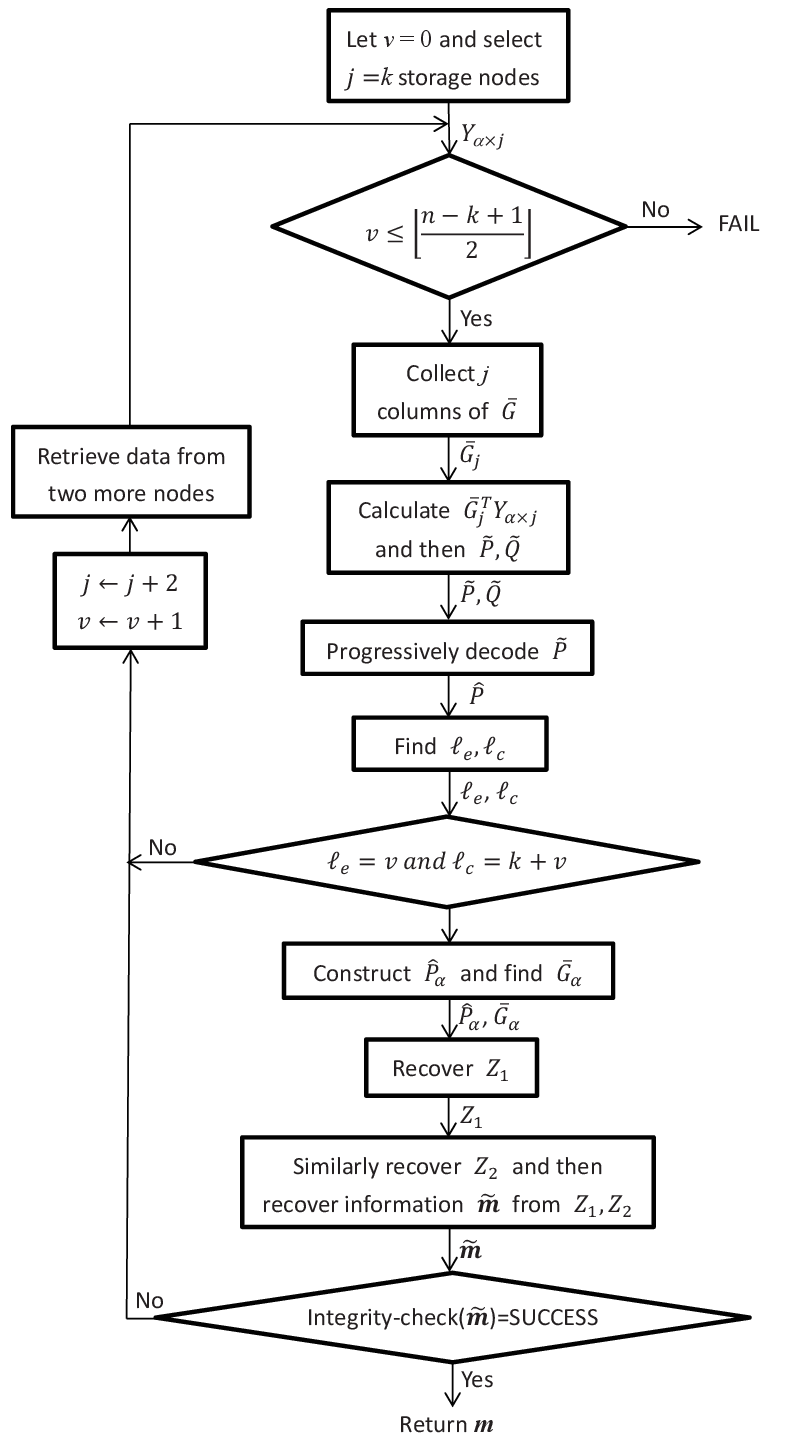}
\caption{Flowchart of Algorithm~\ref{algo:reconstruction-MSR}} \label{fig:A1}
\end{figure}

\section{Encoding and Decoding schemes for Product-Matrix MBR Codes}
\label{SEC:MBR-coding}
In this section, we will find a generator matrix  of the form \eqref{MBR-G-2} such that   the row with the maximum Hamming weight has the least number of nonzero elements.
This generator matrix is thus a least-update-complexity matrix. A decoding scheme for MBR codes that can correct more error patterns is also provided.

\subsection{Encoding Scheme for MBR Codes}
\label{SEC:MBR-encoding}
Let $g(x)=\prod_{j=1}^{n-k}(x-a^j)=\sum_{i=0}^{n-k}g_ix^i$ be the generator polynomial of the $[n,k]$ RS code and $f(x)=\prod_{j=1}^{n-d}(x-a^j)=\sum_{i=0}^{n-d}f_ix^i$ the generator polynomial of the $[n,d]$ RS code, where $a$ is a generator of $GF(2^m)$.\footnote{We assume that $n-k$ and $n-d$ are even.}  A  matrix $G$  can be constructed as

 \begin{eqnarray}
\label{MBR-G-new}
G
&=&\left[\begin{array}{c}
G_k\\
S
\end{array}\right]~,
\end{eqnarray}
where
 \begin{eqnarray}
G_k
&=&\left[\begin{array}{cccccccc}
g_0&g_1&\cdots&g_{n-k}&0&0&\cdots&0\\
0&g_0&\cdots&g_{n-k-1}&g_{n-k}&0&\cdots&0\\
&&&\vdots&&\\
0&\cdots&0&g_{0}&g_1&g_2&\cdots&g_{n-k}\\
\end{array}\right]\label{MBR-G_k}
\end{eqnarray}
and
 \begin{eqnarray}
\label{MBR-B}
S &=&\left[\begin{array}{ccccccccc}
f_0&f_1&\cdots&f_{n-d}&0&0&\cdots&0&0\\
0&f_0&\cdots&f_{n-d-1}&f_{n-d}&0&\cdots&0&0\\
&&&\vdots&&\\
0&\cdots&0&f_{0}&\cdots&f_{n-d}&0&\cdots&0\\
\end{array}\right]~.
\end{eqnarray}
The dimensions of $G_k$ and $S$ are $k\times n$ and $(d-k)\times n$, respectively.
Next, we prove that the main theorem about the rank of $G$ given in~\eqref{MBR-G-new}.
\begin{theorem}
The rank of $G$ given in~\eqref{MBR-G-new} is $d$. That is,  it is a generator matrix of the MBR code.
\end{theorem}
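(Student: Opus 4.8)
The plan is to work entirely in the polynomial representation of the rows and reduce the rank claim to the linear independence of $d$ explicit polynomials. First I would identify the rows of $G$ with polynomials: by the shift structure in \eqref{MBR-G_k} and \eqref{MBR-B}, the $k$ rows of $G_k$ are $g(x), xg(x), \ldots, x^{k-1}g(x)$ and the $d-k$ rows of $B$ are $f(x), xf(x), \ldots, x^{d-k-1}f(x)$. All of these have degree at most $n-1$ (the top row of $G_k$ reaches degree $(k-1)+(n-k)=n-1$, and the top row of $B$ reaches degree $(d-k-1)+(n-d)=n-k-1$), so they genuinely fit in the $n$ columns, and establishing that these $d$ polynomials are linearly independent is exactly the claim $\mathrm{rank}(G)=d$.

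The key structural fact I would record next is the divisibility $g(x)=h(x)f(x)$, where $h(x)=\prod_{j=n-d+1}^{n-k}(x-a^j)$. This holds because $d>k$ forces the root set $\{a^1,\ldots,a^{n-d}\}$ of $f$ to be a subset of the root set $\{a^1,\ldots,a^{n-k}\}$ of $g$; the quotient $h$ collects exactly the remaining roots, so $\deg h=(n-k)-(n-d)=d-k$. This degree count is the crux of the whole argument.

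Then I would set up a hypothetical linear dependence. Suppose $\sum_{i=0}^{k-1}\lambda_i x^i g(x)+\sum_{i=0}^{d-k-1}\mu_i x^i f(x)=0$, and write $P(x)=\sum_i \lambda_i x^i$ with $\deg P\le k-1$ and $Q(x)=\sum_i \mu_i x^i$ with $\deg Q\le d-k-1$. The dependence becomes $P(x)g(x)+Q(x)f(x)=0$, i.e.\ $P(x)g(x)=Q(x)f(x)$ in characteristic two. Substituting $g=hf$ and cancelling the nonzero factor $f$ in the polynomial ring yields $P(x)h(x)=Q(x)$.

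Finally I would close with a degree comparison. If $P\neq 0$ then $\deg(Ph)\ge \deg h=d-k$, whereas $\deg Q\le d-k-1<d-k$, a contradiction; hence $P=0$, and then $Q=Ph=0$ as well, forcing every $\lambda_i$ and $\mu_i$ to vanish. The $d$ polynomials are therefore linearly independent and $\mathrm{rank}(G)=d$. The only real obstacle is spotting that the nested-roots divisibility $f\mid g$ together with the \emph{exact} gap $\deg h=d-k$ is what separates the $B$-rows from the $G_k$-rows; once that is in hand the remainder is bookkeeping. As a by-product, since every row is a multiple of $f$, the row space lies inside the $[n,d]$ RS code generated by $f$, which has dimension $d$, so $G$ in fact generates that entire code.
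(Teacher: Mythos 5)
Your proof is correct and follows essentially the same route as the paper's: both reduce a hypothetical dependence $P(x)g(x)+Q(x)f(x)=0$ to $P(x)h(x)=Q(x)$ via the factorization $g(x)=h(x)f(x)$ with $h(x)=\prod_{j=n-d+1}^{n-k}(x-a^j)$ of degree $d-k$, and close with the degree comparison $\deg h=d-k$ versus $\deg Q\le d-k-1$. Your uniform formulation with arbitrary $P$ and $Q$ is in fact slightly tidier than the paper's, which first argues independence within $G_k$ and within $B$ separately and then treats only mixed combinations (and states the bound as ``at most $d-k-2$ when $\ell=d-k-1$,'' a harmless off-by-one that your bound $\deg Q\le d-k-1$ corrects without changing the conclusion).
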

\begin{proof}
Let the codes generated by $G_k$ and $G$ be $\bar{C}$ and $C$, respectively. It can be seen that any row in $G_k$ and $S$ is a cyclic shift of the previous row. Hence, all rows in $G_k$ and $S$ are linearly independent. Now we only consider the linear combination of rows in $G$ chosen from both $G_k$ and $S$. Since $\bar{C}$ is a linear code, the portion of the linear combination that contains only rows from $G_k$ results in a codeword, named $\ccc$, in $\bar{C}$. Assume that the rows chosen from $S$ are the $j_0$th, $j_1$th, $\ldots$, and $j_{\ell-1}$th rows. Recall that $S$ can be represented by a polynomial matrix as
$$B(x)=\left[\begin{array}{c}
f(x)\\
xf(x)\\
x^2f(x)\\
\vdots\\
x^{d-k-1}f(x)
\end{array}\right].$$
Hence, in the polynomial form, the linear combination can be represented as
\begin{equation}
\label{eq:lc}
\ccc(x)+\sum_{i=0}^{\ell-1}b_ix^{j_i-1}f(x)~,
\end{equation}
where $\ccc(x)$ is not the all-zero codeword  and not all $b_i=0$.
Since $c(x)$ is the code polynomial of $\bar {C}$, it is divisible by $g(x)$ and can be represented as $u(x)g(x)$. Assume  that \eqref{eq:lc} is zero. Then we have
\begin{equation}
\label{eq:lc2}
 u(x)g(x)=-f(x)\sum_{i=0}^{\ell-1}b_ix^{j_i-1}~.
\end{equation}
Recall that $g(x)=\prod_{i=1}^{n-k}(x-a^i)$ and $f(x)=\prod_{i=1}^{n-d}(x-a^i)$.
Hence, \begin{equation}
\label{eq:gf}
g(x)=f(x)\prod_{i=n-d+1}^{n-k}(x-a^i)~.
\end{equation}
Substituting \eqref{eq:gf} into \eqref{eq:lc2} we have
\begin{equation}
\label{eq:lc3}
 u(x)\prod_{i=n-d+1}^{n-k}(x-a^i)=-\sum_{i=0}^{\ell-1}b_ix^{j_i-1}~.
\end{equation}
That is, $\sum_{i=0}^{\ell-1}b_ix^{j_i-1}$ is divisible by $\prod_{i=n-d+1}^{n-k}(x-a^i)$. However, the degree of $\prod_{i=n-d+1}^{n-k}(x-a^i)$ is $d-k$ and the degree of $\sum_{i=0}^{\ell-1}b_ix^{j_i-1}$ is at most $d-k-2$ when $\ell=d-k-1$, the largest possible value for $\ell$. Thus,  $\sum_{i=0}^{\ell-1}b_ix^{j_i-1}$ is not divisible by $\prod_{i=n-d+1}^{n-k}(x-a^i)$ since not all $b_i=0$. This is a contradiction.

Since all rows in $G_k$ and $S$ are codewords in $C$, $G$ is then a generator matrix of the $[n,d]$ RS code $C$.
\end{proof}
\begin{corollary}
The $G$ given in~\eqref{MBR-G-new} is the least-update-complexity matrix.
\end{corollary}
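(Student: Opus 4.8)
The plan is to use the characterization of update complexity recalled in Section~\ref{SEC:Intro}: the update complexity of a regenerating code equals the maximum Hamming weight taken over the rows of its encoding matrix, so it suffices to show that the maximum row weight of $G$ in~\eqref{MBR-G-new} is as small as it can possibly be among all \emph{admissible} generator matrices. Here ``admissible'' means a matrix of the form~\eqref{MBR-G-2}, i.e. $G=\left[\begin{array}{c}G_k\\B\end{array}\right]$ whose top block $G_k$ is a generator matrix of the $[n,k]$ RS code, since this is exactly what the data-reconstruction step requires; without this constraint one could trivially drive every row weight down to $n-d+1$ and break decoding. Accordingly, I would establish a matching lower and upper bound on the maximum row weight, both equal to $n-k+1$.

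For the lower bound, I would argue that every row of $G_k$ is a \emph{nonzero} codeword of the $[n,k]$ RS code. Because that code is MDS, its minimum Hamming distance is $n-k+1$, so each such row has weight at least $n-k+1$. Consequently, \emph{any} admissible $G$ already contains a row of weight $\ge n-k+1$, and its maximum row weight is at least $n-k+1$. This bound does not depend on the particular construction and therefore serves as the benchmark to beat.

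For the upper bound, I would examine the specific rows of~\eqref{MBR-G-new}. Each row of $G_k$ in~\eqref{MBR-G_k} is the coefficient vector of $x^{i}g(x)$ for some $0\le i\le k-1$, where $g(x)=\prod_{j=1}^{n-k}(x-a^j)$ has degree exactly $n-k$. The crux is to show that $g(x)$ has weight exactly $n-k+1$, i.e. all of $g_0,\ldots,g_{n-k}$ are nonzero: since $g(x)$ is a degree-$(n-k)$ polynomial it is supported on at most $n-k+1$ coordinates, yet as a nonzero codeword of the $[n,k]$ RS code it has weight at least $n-k+1$; hence its weight is exactly $n-k+1$ and no coefficient vanishes. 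Each shift by $x^i$ with $i\le k-1$ keeps the support inside the $n$ available columns (the top index is $i+(n-k)\le n-1$), so every row of $G_k$ has weight exactly $n-k+1$. An identical argument applied to $f(x)=\prod_{j=1}^{n-d}(x-a^j)$ shows every row of $B$ in~\eqref{MBR-B} has weight $n-d+1$. Since $k\le d$ gives $n-d+1\le n-k+1$, the maximum row weight of $G$ equals $n-k+1$.

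Combining the two bounds, the maximum row weight of the constructed $G$ equals the smallest value attainable by any admissible generator matrix, so $G$ is a least-update-complexity matrix. The step I expect to be the main obstacle is the claim that the RS generator polynomial has no zero coefficients; while it is tempting to take for granted, the clean justification is the degree-versus-minimum-distance squeeze above, which must be invoked for both $g(x)$ and $f(x)$. The only remaining bookkeeping is to confirm that the shifts defining $G_k$ and $B$ never push a nonzero coefficient past column $n$ and that the $B$-block never dominates the $G_k$-block in weight, both of which follow immediately from $\deg g=n-k$, $\deg f=n-d$, and $k\le d$.
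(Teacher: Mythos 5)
Your proposal is correct and follows essentially the same route as the paper's proof: a lower bound of $n-k+1$ on row weight from the minimum distance of the $[n,k]$ RS code, matched by the degree-versus-distance squeeze showing $g(x)$ (and $f(x)$) have full weight $n-k+1$ (resp.\ $n-d+1$), so every row of $G$ attains the bound. Your write-up is in fact a bit more careful than the paper's, making explicit the constraint class (matrices of the form~\eqref{MBR-G-2} with $G_k$ generating the $[n,k]$ code), the fact that the shifts $x^i g(x)$, $x^i f(x)$ stay within $n$ columns, and the comparison $n-d+1\le n-k+1$, all of which the paper leaves implicit.
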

\begin{proof}
Since $G_k$ must be the generator matrix of the $[n,k]$ RS code $\bar{C}$, the Hamming weight of each row of $G_k$ is greater than or equal to the minimum Hamming distance of $\bar{C}$, $n-k+1$. Since the degree of $g(x)$ is $n-k$ and itself is a codeword in $\bar{C}$, the nonzero coefficients of $g(x)$ is $n-k+1$ and each row of $G_k$ is with $n-k+1$ Hamming weight. A similar argument can be applied to each row of $S$ such that the Hamming weight of it is $n-d+1$. Thus, the $G$ given in~\eqref{MBR-G-new} has the least number of nonzero elements. Further, Since $G_k$ is the generator matrix of the $[n,k]$ code, the minimum Hamming of its row can have is $n-k+1$, namely, the minimum Hamming distance of the code. Hence, the row with maximum Hamming weight in $G$ is $n-k+1$.
\end{proof}

Since $\bar{C}$ is also a cyclic code, it can be arranged as a systematic code. $G_k$ is then given by

\begin{eqnarray}
 G_k
&=&\left[\begin{array}{cccccccccc}
b_{00}&b_{01}&b_{02}&\cdots&b_{0(n-k-1)}&1&0&0&\cdots&0\\
b_{10}&b_{11}&b_{12}&\cdots&b_{1(n-k-1)}&0&1&0&\cdots&0\\
b_{20}&b_{21}&b_{22}&\cdots&b_{2(n-k-1)}&0&&1&\cdots&0\\
&\vdots&&&&\vdots&&&&\vdots\\
b_{(k-1)0}&b_{(k-1)1}&b_{(k-1)2}&\cdots&b_{(k-1)(n-k-1)}&0&0&0&\cdots&1
\end{array}\right]~,\label{MBR-G-S}\end{eqnarray}
where
$$x^{n-k+i}=u_i(x)g(x)+b_i(x)\mbox{ for } 0\le i\le k-1,$$
and $b_i(x)=b_{i0}+b_{i1}x+\cdots+b_{i(n-k-1)}x^{n-k-1}.$ It is easy to see that $G$ with $G_k$ as a submatrix is still a least-update-complexity  matrix. The advantage of a systematic code will become clear in the decoding procedure of the MBR code.

We now consider the number of encoded symbols that need to be updated while a single data symbol is modified. First, we assume that the modified data symbol is located in  $A_1$. If  the modified data symbol is located in the diagonal of $A_1$,  $(n-k+1)$ encoded symbols need to be updated; otherwise, there are two corresponding encoding symbols in $A_1$ modified such that $2(n-k+1)$ encoded symbols need to be updated. Next, we assume that the modified data symbol is located in  $A_2$. Then $(n-k+1)+(n-d+1)=2n-k-d+2$ encoded symbols need to be updated.

\subsection{Decoding Scheme for MBR Codes}
\label{SEC:MBR-decoding}
The generator polynomial of the RS code encoded by~\eqref{MBR-G-S}
has $a^{n-k},a^{n-k-1},\ldots, a$ as roots. Hence, the progressive
decoding scheme based on the $[n,k]$ RS code given in~\cite{HAN12-INFOCOM} can be applied to decode the
MBR code. The decoding algorithm given in~\cite{HAN12-INFOCOM} is slightly modified as follows.

Assume that the data collector retrieves encoded
symbols from $\ell$ storage nodes $j_0,\ j_1,\ldots,\  j_{\ell-1}$, $k\le \ell\le n$.
The data collector receives $d$ vectors
where each vector has $\ell$ symbols. Denoting the first $k$ vectors among the $d$ vectors as $Y_{k\times \ell}$
and the remaining $d-k$ vectors as $Y_{(d-k)\times \ell}$. By the encoding of the MBR code,  the
codewords in the last $d-k$ rows of $C$ can be viewed as encoded by $G_k$
instead of $G$.  Hence, the decoder of the $[n,k]$ RS code can be applied on
$Y_{(d-k)\times \ell}$ to recover the codewords in the last $d-k$ rows of $C$.

Let
 $\tilde{C}_{(d-k)\times k}$ be the last $k$ columns of the
 codewords recovered by the error-and-erasure decoder in the last $d-k$ rows of $C$. Since the code generated by ~\eqref{MBR-G-S} is a systematic code, $A_2$ in $U$ can  be
reconstructed as
\begin{eqnarray}
\tilde{A}_2=\tilde{C}_{(d-k)\times k}~.\label{A_2}
\end{eqnarray}
We then calculate  the $j_0$th, $j_1$th,
$\ldots$, $j_{\ell-1}$th columns of $\tilde{A}_2^T\cdot B$ as $E_{k\times\ell}$, and  subtract $E_{k\times\ell}$
from $Y_{k\times \ell}$:
\begin{eqnarray}
Y'_{k\times\ell}=Y_{k\times \ell}-E_{k\times\ell}~.\label{Y_k}
\end{eqnarray}
Applying the error-and-erasure decoding algorithm of the $[n,k]$ RS code again on $Y'_{k\times\ell}$ we can reconstruct $A_1$ as
\begin{eqnarray}
\tilde{A}_1=\tilde{C}_{k\times k}~.\label{A_1}
\end{eqnarray}

 The decoded information sequence is then verified by data integrity check. If the integrity check is
passed, the data reconstruction is successful; otherwise the progressive
decoding procedure is applied, where two more storage nodes need to be accessed
from the remaining storage nodes in each round until no further errors are
detected.

The decoding capability of the above decoding algorithm is $\frac{n-k}{2}$. Since each erroneous storage node sends $\alpha=d$ symbols to the data collector, in general, not all $\alpha$ symbols are wrong if failures in the storage nodes are caused by random faults. Hence, the decoding algorithm given in~\cite{HAN12-INFOCOM} can be modified as follows to extend error correction capability.  After decoding $Y_{(d-k)\times \ell}$, one can locate the erroneous columns of $Y_{(d-k)\times \ell}$ by comparing the decoded result to it. Assume that there are $v$ erroneous columns located. Delete the corresponding columns in $E_{k\times\ell}$ and $Y_{k\times \ell}$ and we have
\begin{equation}Y'_{k\times(\ell-v)}=Y_{k\times (\ell-v)}-E_{k\times (\ell-v)}~.\label{Y_v}\end{equation}
Applying the error-and-erasure decoding algorithm of the $[n,k]$ RS code again on $Y'_{k\times(\ell-v)}$ to reconstruct $A_1$ if $\ell-v\ge k$; otherwise the progressive decoding is applied.
The modified decoding algorithm  is summarized in Algorithm~\ref{algo:reconstruction-MBR} and its corresponding flow chart is shown in Fig.~\ref{fig:A2}. The advantage of the modified decoding algorithm is that it can correct errors up to $$\frac{n-k}{2}+ \left\lfloor\frac{n-k+1-\lfloor\frac{n-k+1}{2}\rfloor}{2}\right\rfloor$$
even though not all error patterns up to such number of errors can be corrected.

Next, we  give an example for Algorithm~\ref{algo:reconstruction-MBR} based on a shortened RS code. Let $m = 3$, $n=5$, $k=3$, $d = 4$. Then $\alpha=4$ and
$$ G = \left[\begin{array}{ccccc}
             3 & 6 & 1 & 0 & 0 \\
             1 & 1 & 0 & 1 & 0 \\
             3 & 7 & 0 & 0 & 1 \\
             2 & 1 & 0 & 0 & 0  \end{array} \right]. $$
Let the information sequence $\m = \left[0~~4~~0~~3~~7~~0~~3~~7~~7\right]$. Then 
$$ U = \left[\begin{array}{cccc}
                         0 & 4 & 0 & 3 \\
                         4 & 3 & 7 & 7 \\
                         0 & 7 & 0 & 7 \\
                         3 & 7 & 7 & 0 \end{array} \right]$$
and 
$$ C  = \left[\begin{array}{ccccc}
                         2 & 7 & 0 & 4 & 0 \\
                         3 & 2 & 4 & 3 & 7 \\
                         2 & 0 & 0 & 7 & 0 \\
                         0 & 5 & 3 & 7 & 7 \end{array} \right]~. $$
 Assume that the first node is compromised and the vector that the data collector retrieves from the first three nodes for data reconstruction is 
$$ Y_{d \times \ell}  = \left[\begin{array}{ccc}
                         1 & 7 & 0  \\
                         0 & 2 & 4  \\
                         4 & 0 & 0  \\
                         6 & 5 & 3 \end{array} \right].$$
At the beginning, $\ell = k$ and we assume that $v=0$. We decode the last $d-k$ rows of $Y_{d \times \ell}$ and obtain
$$ \bar{C}_{(d-k)\times k} = \left[\begin{array}{ccc}
                         3 & 6 & 3 \end{array} \right]~.$$
By Equations~(\ref{A_2}) to (\ref{A_1}),
$$ \tilde{A}_2 = \left[\begin{array}{ccc}
                         3 & 6 & 3 \end{array} \right],\
Y^{\prime}_{k \times (\ell-v)} = \left[\begin{array}{ccc}
                         7 & 4 & 0  \\
                         7 & 4 & 4  \\
                         2 & 3 & 0 \end{array} \right],\
\tilde{A}_1 = \left[\begin{array}{ccc}
                         0 & 1 & 2  \\
                         4 & 2 & 7  \\
                         0 & 0 & 7 \end{array} \right]~.$$
Therefore, $\tilde{\m} = \left[0~~1~~2~~2~~7~~7~~3~~6~~3\right].$ The integrity check of $\m$ also fails. The data collector needs to retrieve data from two more nodes and assign $\ell + 2$ to $\ell$. By following the same step as above, $\bar{C}_{(d-k)\times k} = \tilde{A}_2 = \left[\begin{array}{ccc}
                         3 & 7 & 7 \end{array} \right]$, 
$$Y^{\prime}_{k \times (\ell-v)} = \left[\begin{array}{cccc}
                         4 & 0 & 4 & 0 \\
                         5 & 4 & 3 & 7 \\
                         7 & 0 & 7 & 0 \end{array} \right],
\tilde{A}_1 = \left[\begin{array}{ccc}
                         0 & 4 & 0  \\
                         4 & 3 & 7  \\
                         0 & 7 & 0 \end{array} \right].$$
The information sequence is recovered correctly, i.e., $\tilde{\m} = \left[0~~4~~0~~3~~7~~0~~3~~7~~7\right]$.

\begin{algorithm}[h]
\Begin {
The data collector randomly chooses $k$ storage nodes and retrieves encoded data,
$Y_{d \times k}$;\\

$\ell \leftarrow k$;\\
\Repeat{$\ell \ge n-2$} {
Perform progressive error-erasure decoding on last $d-k$ rows in $Y_{d\times\ell}$, $Y_{(d-k)\times \ell}$, to recover $\tilde{C}$ (error-erasure decoding performs $d-k$ times);\\
Locate the erroneous columns in $Y_{(d-k)\times \ell}$ (assume to have $v$ columns);\\
Calculate $\tilde{A}_2$ via~\eqref{A_2};\\
Calculate $\tilde{A}_2\cdot B$  and obtain $Y'_{k\times (\ell-v)}$ via~\eqref{Y_v};\\
\If{($\ell-v\ge k$)} {Perform progressive error-erasure decoding  on $Y'_{k\times (\ell-v)}$ to recover the first $k$ rows in codeword vector (error-erasure decoding performs $k$ times);\\
Calculate $\tilde{A}_1$ via~\eqref{A_1};\\
Recover the information sequence $\tilde{\m}$ from $\tilde{A}_1$ and $\tilde{A}_2$;\\
 \If{integrity-check($\tilde{\m}$) = SUCCESS} {
\Return $\tilde{\m}$;
} }
$\ell\leftarrow \ell+2$;\\
Retrieve two more encoded data from remaining storage nodes  and merge them into $Y_{d\times \ell}$; \\
}
\Return FAIL;
}
\caption{Decoding of MBR Codes for Data Reconstruction}
\label{algo:reconstruction-MBR}
\end{algorithm}

\begin{figure}
\centering
\includegraphics[width=12cm]{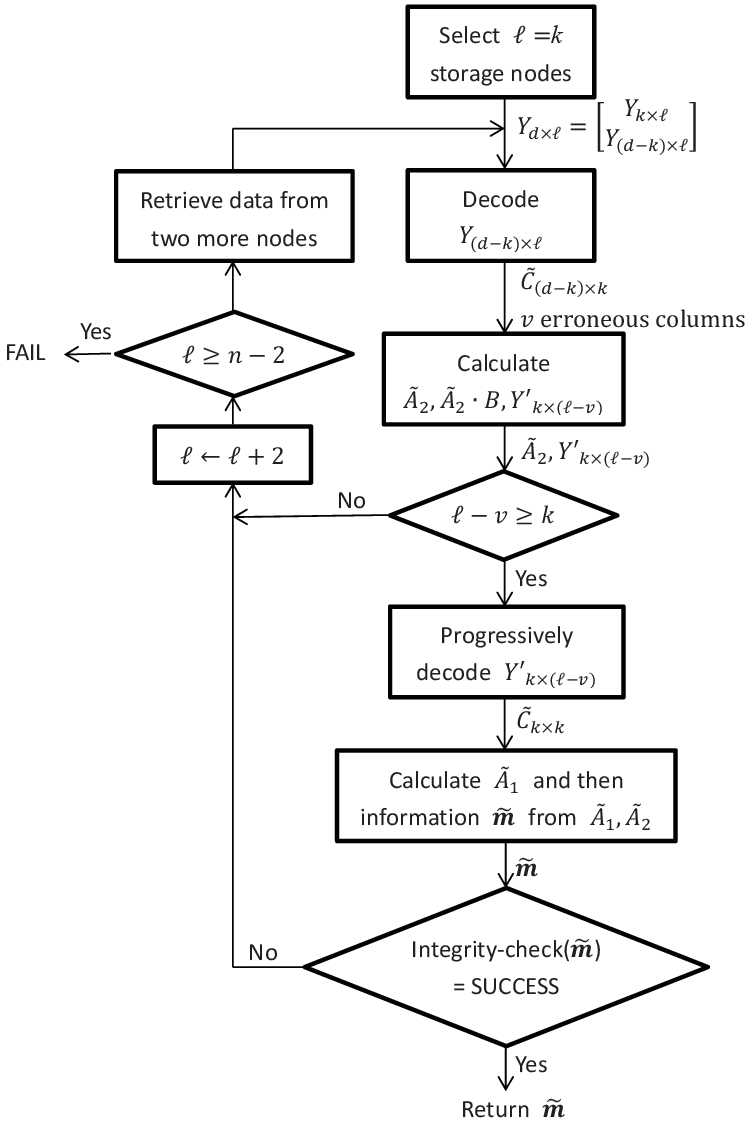}
\caption{Flow chart of Algorithm~\ref{algo:reconstruction-MBR}} \label{fig:A2}
\end{figure}

One important function of regenerating codes is to perform data regeneration with least repair bandwidth while one node is failed. Since the decoding schemes proposed in~\cite{HAN12-INFOCOM} can be applied directly without modification to the proposed MSR and MBR codes in this work, the decoding schemes of data regeneration for these codes are omitted in this work. The interested readers can refer to \cite{HAN12-INFOCOM} for details on these decoding schemes.

\section{Performance Evaluation}
\label{SEC:eval}

\begin{figure}
\centering
\includegraphics[width=8cm]{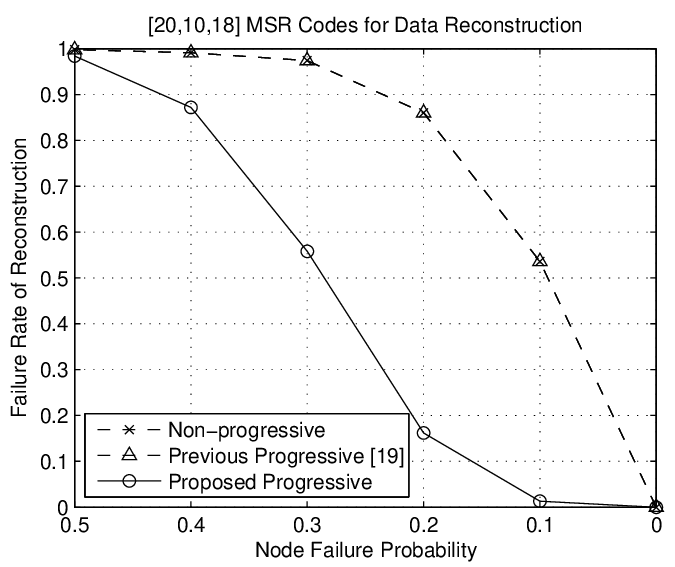}
\caption{Comparison of the failure rate between the algorithm in~\cite{HAN12-INFOCOM} and the proposed algorithm for $[20,10,18]$ MSR codes} \label{fig:fig1}
\end{figure}

\begin{figure}
\centering
\includegraphics[width=8cm]{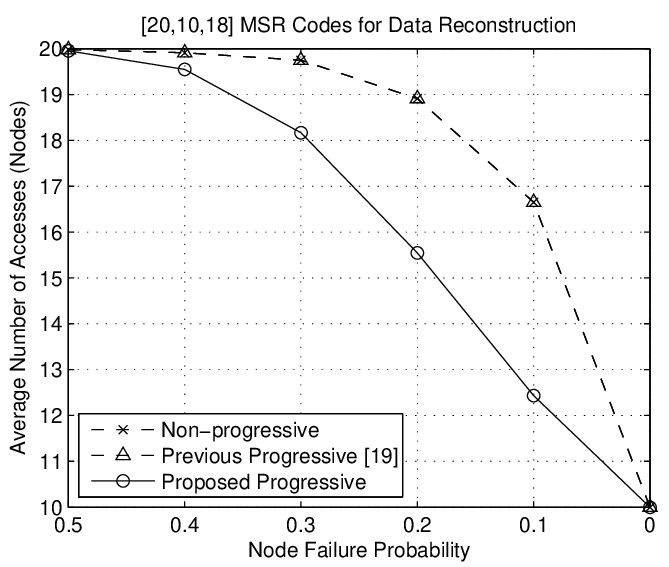}
\caption{Comparison of the number of node accesses between the algorithm in~\cite{HAN12-INFOCOM} and the proposed algorithm for $[20,10,18]$ MSR codes} \label{fig:fig2}
\end{figure}

In this section, we first analyze the fault-tolerance capability of the proposed
codes in the presence of crash-stop and Byzantine failures, security strength with malicious attack, and then carry out
numerical simulations to evaluate the performance  for proposed schemes.

The  fault-tolerance capability of product-matrix MSR and MBR codes has been investigated fully in~\cite{HAN12-INFOCOM} where CRC or cryptographic hash function is adopted as the data integrity check. Their error-correction capability  was also presented in~\cite{RAS12}.

We need to verify whether the
reconstructed data are correct. Progressive decoding algorithms are
implemented that  incrementally retrieve additional stored data and perform
data reconstruction  when errors have been detected. Since cryptographic hash function has better security strength than CRC on data integrity check, it is  adopted to verify the integrity of stored data. In particular, for data reconstruction,  the hash value is coded along with the original data and distributed among storage nodes.

We first consider two types of failures,
 crash-stop failures and Byzantine failures. Nodes are assumed to fail
independently. In both cases, the
fault-tolerance capability is measured by the maximum number of failures that the
system can handle to maintain functionality.

A crash-stop failure on a node can be viewed as an erasure in the codeword. Since
$k$ nodes need to be alive for data reconstruction, the maximum number of crash-stop failures that can be
tolerated in data reconstruction is $n-k$. Note that since all accessed nodes
contain correct data, the associated hash values are also correct.

For an error-correcting code, two additional correct code fragments are needed to
correct one erroneous code fragment. Thus, with the proposed MSR decoding algorithm,
$\lfloor \frac{n-k}{2}\rfloor$ erroneous
nodes can be tolerated in data reconstruction. For the proposed MBR decoding algorithm, not only any $\frac{n-k}{2}$ erroneous
nodes can be tolerated but  it can also correct errors up to $$\frac{n-k}{2}+ \left\lfloor\frac{n-k+1-\lfloor\frac{n-k+1}{2}\rfloor}{2}\right\rfloor$$
even though not all error patterns up to such number of errors can be corrected.

In analyzing the security strength with malicious attacks, we consider forgery attacks, where Byzantine attackers try to disrupt the
data reconstruction process by forging data collaboratively.
In other words, collusion among compromised nodes is considered. We want to determine
the minimum number of  compromised nodes to forge the data in data reconstruction.
By using cryptographic hash functions, the security strength can be increased since the operation to obtain the hash value is non-linear. In this case, the attacker  needs to obtain the original information data to forge the hash value. Hence, the attacker needs to compromise at least $k$ nodes  in data reconstruction.

The proposed data reconstruction algorithms for MSR and MBR codes have also been evaluated by Monte Carlo simulations. From now on, the codes based on shortened RS codes are employed for simulations.  They are compared with the data reconstruction algorithms previously proposed in~~\cite{HAN12-INFOCOM}. The performance of a traditional decoding scheme that is non-progressive is also provided for comparison purposes.\footnote{Since no data integrity check is performed in the deocding algorithms given in~\cite{RAS12}, to reach error-correction capability of the MSR and MBR codes, $n$ nodes need to be accessed. Hence, the number of accessed nodes in deocding algorithms in~\cite{RAS12} are much larger than  those of the non-progressive version presented here.} After $k$ nodes are accessed, if the integrity check fails, the data collector will access all remaining $n-k$ nodes in data reconstruction in the non-progressive decoding scheme. Each
data point is generated from $10^3$ simulation runs. Storage nodes may fail
arbitrarily with the Byzantine failure probability ranging from $0$ to $0.5$. In both
schemes, $[n,k,d]$ and $m$ are chosen to be $[20,10,18]$ and $5$, respectively.

In the first set of simulations, we compare the proposed algorithm with the progressive algorithm in ~\cite{HAN12-INFOCOM} and the non-progressive algorithm in terms of the failure rate of  reconstruction and the average number of node accesses, which indicates the required bandwidth for data reconstruction. Failure rate is defined as the percentage of runs for which reconstruction fails (due to insufficient number of healthy storage nodes). Figure~\ref{fig:fig1} shows that the proposed algorithm can successfully reconstruct the data with much higher probability than the previous progressive or non-progressive algorithm for the same node failure probability. For example, when the node failure probability is $0.1$, only about 1\% of the time, reconstruction fails using the proposed algorithm, in contrast to 50\% with the old algorithm. The advantage of the proposed algorithm is also pronounced in the average number of accessed nodes for data reconstruction, as illustrated in Fig.~\ref{fig:fig2}. For example, on an average, only $2.5$ extra nodes are needed by the proposed algorithm under the node failure probability of $0.1$; while over $6.5$ extra nodes are required by the old algorithm  in ~\cite{HAN12-INFOCOM}. It should be noted that the actual saving attained by the new algorithm depends on the setting of $n$, $k$, $d$ and the number of errors.

\begin{figure}
\centering
\includegraphics[width=8cm]{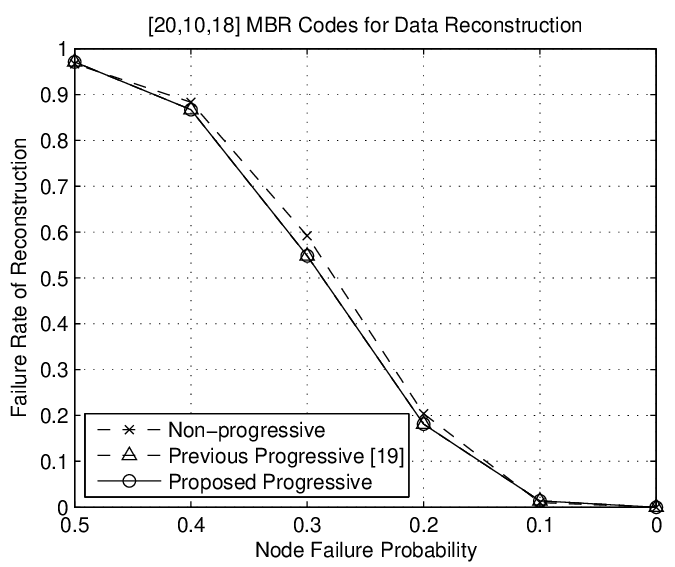}
\caption{Failure-rate comparison between the previous algorithm in~\cite{HAN12-INFOCOM} and the proposed algorithm for $[20,10,18]$ MBR codes} \label{fig:fig3}
\end{figure}

\begin{figure}
\centering
\includegraphics[width=8cm]{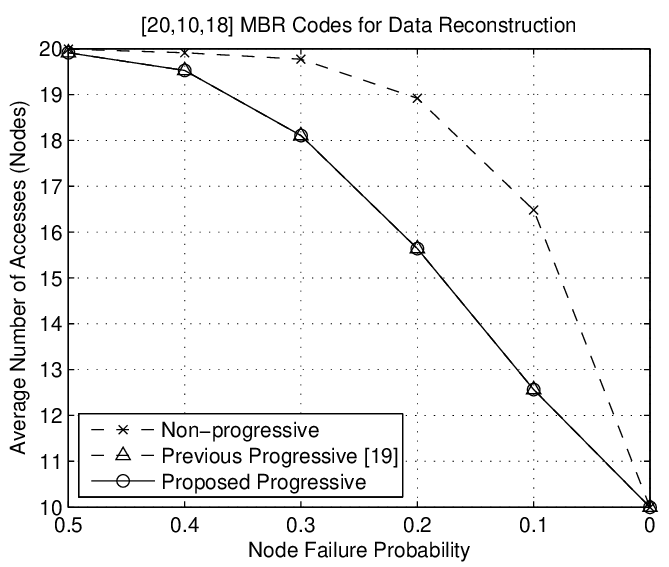}
\caption{Node-access comparison between the previous algorithm in~\cite{HAN12-INFOCOM} and the proposed algorithm for $[20,10,18]$ MBR codes} \label{fig:fig4}
\end{figure}

The previous and proposed decoding algorithms for MBR codes are compared in the second set of simulations. Figures~\ref{fig:fig3} and~\ref{fig:fig4} show that both of the progressive algorithms have identical failure rates of reconstruction and average number of accessed nodes. This result implies that the specific error patterns, which only the proposed algorithm is able to handle for successful data reconstruction, do not happen very frequently. However, the computational complexity of the proposed algorithm for MBR encoding is much lower since no matrix inversion and multiplications are needed in (\ref{A_2}) and (\ref{A_1}). Moreover, both the progressive algorithms are better than the non-progressive algorithm in failure rates of reconstruction and average number of accessed nodes.

In the evaluation of the update complexity, two measures are considered: the metric given in~\cite{RAW11} and the number of updated symbols when a single data symbol is modified. The first metric corresponds to   the maximum number of nonzero elements in all rows of the generator matrix $G$. Denote by $\eta(R)$ the ratio of the update complexity of the proposed generator matrix to that of the generator matrix given in~\cite{RAS11}, where $R=k/n$. It can be seen that,
$$\eta_{MSR}(R)=\frac{n-\alpha+1}{n}\approx 1-R$$
for MSR codes since the generator matrix of the MSR code proposed in~\cite{RAS11} is a Vandermonde matrix. Two types of generator matrices of the MBR codes have been proposed in~\cite{RAS11}: the Vandermonde matrix  and  a systematic matrix  based on Cauchy matrix. With Vandermonde matrix,
$$\eta_{MBR}(R)=\frac{n-k+1}{n}\approx 1-R~.$$
The systematic matrix based on Cauchy matrix is given by~\cite{RAS11}
$$\left[\begin{array}{cc}
I_k&\phi^T\\
\0&\Delta^T
\end{array}\right]~,$$
where $I_k$ is the $k\times k$ identity matrix, $\0$ is the $(d-k)\times k$ all-zero matrix, and $[\phi\ \Delta]$ is a Cauchy matrix. Since all elements in the Cauchy matrix are nonzero,
$$\eta_{MBR}(R)=\frac{n-k+1}{n-k+1}=1~.$$

 \begin{table*}[tbh]
\caption{Comparison on the average number of updated symbols while a single data symbol is modified}
\label{tab:evaluation}
\begin{minipage}{8cm}
 \begin{center}
\begin{tabular}{|c||c|c||c|c||}\hline
&\mc{2}{c||}{MSR code}&\mc{2}{c||}{MBR code}\\ \hline\hline
&[20 10 18]&[100 40 78] &[20 10 18]&[100 40 78] \\ \hline
 Proposed method&22 &121 & 8&48\\ \hline
 Vandermonde matrix&36 &195 & 19&99\\ \hline
 Systematic version based on linear remapping~\cite{RAS11}*&88&2323&34&807\\ \hline
 Systematic version based on Cauchy matrix~\cite{RAS11}&-&-&10&60\\ \hline
\end{tabular}
\end{center}
* The numbers are obtained from simulation results
\end{minipage}
\end{table*}

The number of updated symbols that need to be modified when a single data symbol is changed in MSR and MBR codes are summarized in Table~\ref{tab:evaluation}. By the arguments given in previous sections, the average number of updated symbols when a single data symbol is modified for the proposed MSR and MBR codes are $2(n-\alpha+1)\frac{\alpha}{\alpha+1}$ and
$\frac{kd(n-k+1)+k(d-k)(n-d+1)}{2kd-k(k-1)}$, respectively. These numbers for Vandermonde-matrix based MSR and MBR codes are $2n\frac{\alpha}{\alpha+1}$ and
$\frac{n(2kd-k^2)}{2kd-k(k-1)}$, respectively. The number is $\frac{kd(n-k+1)+k(d-k)(n-k)}{2kd-k(k-1)}$  for the systematic MBR code based on Cauchy matrix. Note that, the numbers for systematic codes based on linear remapping are obtained from simulations. From Table~\ref{tab:evaluation}, one can observe that the proposed method has the best performance on the number of updated symbols when a single data symbol is modified, and the systematic version based on linear remapping performs the worst among all schemes in the table. For example, for the $[20,10,18]$ MSR code, the average number
of encoded symbols that need to be updated for a single data symbol modification is
$88$ in the systematic version based on linear remapping but only $22$ with the
proposed encoding matrix. This is a 4-fold improvement in complexity. In the case of the $[100,40,78]$ MSR code, the improvement is 19-fold. Hence, the proposed approach has much lower update complexity than the
systematic approach. It can be seen that after linear remapping, the modified symbols almost occur in all check positions of the code vector. This is because even when only one data symbol is modified, due to the symmetry requirement on the information matrix, the modification propagates to check positions of all codewords (rows) in the code vector through linear remapping. One can also observe that even though the Cauchy-based MBR code results in the same maximum number of nonzero elements in all rows of the generator matrix  as the proposed MBR code, it requires more symbol updates when a single data symbol is modified.

\section{Conclusion}
\label{SEC:conclude}
In this work, we proposed new encoding and decoding schemes for the $[n,d]$ error-correcting MSR and MBR codes that generalize the previously proposed codes in~\cite{HAN12-INFOCOM}. Through both theoretical analysis and numerical simulations, we demonstrated the superior  error correction capability, low update complexity and low computation complexity of the new codes.

Clearly, there is a trade-off between the update complexity and error correction capability of regenerating codes. In this work, we  found encoders of  product-matrix regenerating codes and then optimized their update complexity. Possible future work includes the study of encoding schemes that first design regenerating codes with good  update complexity and then optimize their error correction capability.

The least update-complexity codes  in this work minimize the  maximum number of nonzero elements in all rows of the generation matrix, but they do not minimize the number of symbol updates when a single data symbol is modified.  For instance, due to symmetry requirement on the information vector,  two symbols need to be updated in the information vector during the encoding process for a single modified symbol in some cases.  Another possible future work is to seek codes with the least number of updated encoded symbols.

\end{document}